\documentclass[10pt,journal,epsfig]{IEEEtran}

\usepackage[dvips]{graphicx}
\usepackage{graphicx}
\usepackage{amssymb}
\usepackage{cite}
\usepackage{subfigure}
\usepackage{amsmath}

\usepackage{subeqnarray}
\usepackage{cases}
\usepackage[centerlast]{caption2}



\begin{document}

\IEEEoverridecommandlockouts
\title{MIMO Secret Communications Against an Active Eavesdropper
}
\author{ 
Lingxiang Li, Athina P. Petropulu, ~\IEEEmembership{Fellow,~IEEE}, Zhi Chen, ~\IEEEmembership{Member,~IEEE}
\thanks{
Lingxiang Li and Zhi Chen are with the National Key Laboratory of Science and Technology on Communications,
UESTC, Chengdu 611731, China (e-mails: lingxiang.li@rutgers.edu; chenzhi@uestc.edu.cn). The work was performed
when L. Li was a visiting student at Rutgers University.}
\thanks{Athina P. Petropulu is with the Department of Electrical and Computer Engineering, Rutgers--The State University of New Jersey, New Brunswick, NJ 08854 USA (e-mail: athinap@rci.rutgers.edu).}
}


\maketitle

\begin{abstract}

This paper considers a scenario in which an \emph{Alice}-\emph{Bob} pair wishes to communicate in secret in the presence of an active \emph{Eve},
who is capable of jamming as well as eavesdropping in Full-Duplex (FD) mode.
As countermeasure, \emph{Bob} also operates in FD mode, using a subset of its antennas to act as receiver, and the remaining antennas to act as jammer and transmit noise.  With a goal to maximize the achievable secrecy degrees of freedom (S.D.o.F.) of the system, we provide the optimal
receive/transmit antennas allocation at \emph{Bob}, based on which
we determine in closed form the maximum achievable S.D.o.F.. We further investigate
the adverse scenario in which \emph{Eve} knows \emph{Bob}'s transmission
strategy and optimizes its transmit/receive antennas allocation in order to minimize the achievable S.D.o.F..
For that case we find the worst-case achievable
S.D.o.F.. We also provide a method for constructing the precoding
matrices of \emph{Alice} and \emph{Bob}, based on which the maximum S.D.o.F. can be achieved. Numerical results
validate the theoretical findings and demonstrate the performance of the proposed method in realistic settings.
\end{abstract}

\begin{keywords}
Physical-layer security, Cooperative communications, Multi-input Multi-output, Active Eavesdropper.
\end{keywords}


\newtheorem{proposition}{Proposition}
\newtheorem{theorem}{Theorem}
\newtheorem{corollary}{Corollary}

\section{Introduction}

Communication security in the presence of malicious nodes has received a lot of attention.
Most of the current literature addresses the case in which the malicious nodes are
\emph{passive} eavesdroppers, i.e., they just listen. In that case, the eavesdroppers
reduce the secrecy rate by the rate they can sustain. Approaches to improve the secrecy rate in the presence of passive eavesdroppers include
multi-antenna techniques \cite{Jiangyuan10,JiangyuanLi12,Wornell11,Swindlehurst12} and artificial noise (jamming) based methods
\cite{Negi05,Swindlehurst11,Zheng11,Han11,zheng151,LunDong10,Hoon14,Gan13,LLX16}; all these methods
target at increasing the received signal-to-noise ratio (SNR) at the
legitimate receiver, or decreasing the received SNR at the eavesdropper.
Jamming can be implemented by
the source \cite{Negi05}, the external helper \cite{Swindlehurst11,Zheng11,Han11,zheng151,LunDong10,Hoon14}, or
the legitimate receiver who may work in Full-Duplex (FD) mode \cite{Gan13,LLX16}.

Recently, the case of \emph{active} eavesdroppers has been receiving a lot of attention.
By active eavesdropper we here refer to a powerful adversary that can jam as well as eavesdrop the legitimate receiver.
One line of research in that area is gearing towards designing effective active attack schemes
for the purpose of minimizing the achievable secrecy transmission rate \cite{XZhou12,Mukherjee11,Andrey14}.
Another line of research focuses on 
detecting active attacks and offering countermeasures to guarantee reliable secret communications
\cite{Kapetanovic15,Azzam16,WYu15,Andrey15,Amitav13,QZhu11,Javan14}.
In particular, \cite{Kapetanovic15,Azzam16,WYu15} consider a massive multi-input multi-output (MIMO) scenario,
in which an active eavesdropper attacks the channel estimation process by transmitting artificial noise.
\cite{Andrey15,Amitav13},\cite{QZhu11,Javan14} consider a single-input single-output (SISO) scenario, a MIMO scenario, a relay
scenario, and an OFDM scenario, respectively, wherein an active eavesdropper
tries to reduce the total network throughput by choosing to be a jammer, or an eavesdropper,
or combination of the above, so that it creates the most unfavorable conditions for secret communications.
To combat such malicious behavior, the source in \cite{Andrey15,Amitav13}
chooses between transmitting, remaining silent or acting as a jammer. 
The work of \cite{QZhu11,Javan14} conducts relaying selection and power allocation
among all the available sub-carriers, respectively.

In this paper, we consider a MIMO \emph{Alice}-\emph{Bob}-\emph{Eve} wiretap channel,
in which \emph{Eve} is an active eavesdropper, who can transmit and receive in FD
fashion by appropriately allocating its antennas for transmission or reception.
Our goal is to provide countermeasures that will ensure maximum secrecy from the point of view of secrecy degrees of freedom (S.D.o.F.).
Our main contributions are summarized as follows.


\begin{enumerate}
\item As countermeasure, we proposed an FD \emph{Bob}, who transmits jamming signals while receiving.
Under this scenario, we determine in closed form the maximum achievable S.D.o.F., as function
of the number of antennas at each terminal (see eq. (\ref{eq6})).
Moreover, we give the optimal transmit/receive antenna allocation of \emph{Bob} (see (\ref{eqNbt})), which achieves the maximum S.D.o.F..
\item 
We obtain analytically the worst-case achievable S.D.o.F. (see eq. (\ref{eq8})), corresponding to the case in which
\emph{Eve} knows the strategy adopted by \emph{Alice} and
\emph{Bob} and optimizes its transmit/receive antenna allocation for the purpose of minimizing the
achievable S.D.o.F..
\item We provide a method for constructing the precoding matrix pair at \emph{Alice} and \emph{Bob}, 
which achieves the maximum S.D.o.F.. While the aforementioned achievable S.D.o.F. results
do not depend on channel state information (CSI), the precoding matrices depend on
the eavesdropping channels and also the null space of the self-interference
channels at \emph{Eve} and \emph{Bob}.
\end{enumerate}

The rest of this paper is organized as follows. In
Section II, we describe the system model
and formulate the S.D.o.F. maximization problem. In Section III, we determine
in closed form the maximum achievable S.D.o.F., and provide an optimal transmission scheme
which achieves the maximum S.D.o.F..
In Section IV, we consider an active \emph{Eve}
who knows the transmission strategy adopted
by the legitimate terminals and tries to minimize the achievable S.D.o.F. by antenna allocation;
for that case, we find the worst-case achievable S.D.o.F..
Numerical results are given in Section V and conclusions are drawn in Section VI.

\textit{Notation:}
$x\sim\mathcal{CN}(0,\Sigma)$ means $x$ is a random variable following a complex circular Gaussian
distribution with mean zero and covariance $\Sigma$; $(a)^+ \triangleq \max(a,0)$;
$\lfloor a\rfloor$ denotes the biggest integer which is less or equal to $a$;
$|a|$ denotes the absolute value of $a$.
We use lower case bold to denote vectors;
${\bf I}$ represents an identity matrix with appropriate size;
$\mathbb{C}^{N \times M}$ indicates a ${N \times M}$ complex matrix set;
${\bf{A}}^H$, $\rm{tr}\{\bf{A}\}$, $\rm{rank}\{\bf{A}\}$, and $|{\bf{A}}|$ stand for the hermitian transpose, trace,
rank and determinant of the matrix $\bf{A}$, respectively.

\begin{figure}[!t]
\includegraphics[width=3in]{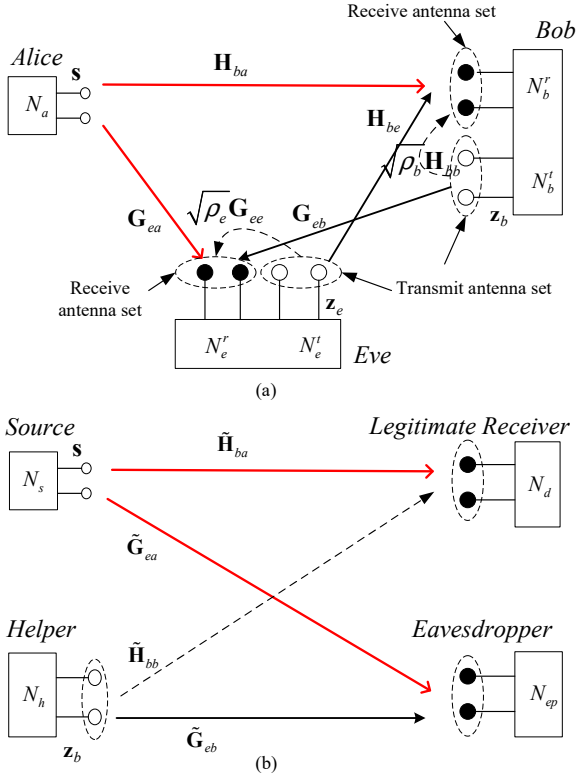}
\DeclareGraphicsExtensions.
\caption{(a) Gaussian wiretap channel with an active eavesdropper.
(b) Helper-assisted Gaussian wiretap channel with a passive eavesdropper.} 
\vspace* {-12pt}
\end{figure}

\section{System Model and Problem Statement}
We consider a Gaussian wiretap channel (see Fig. 1(a)) consisting of \emph{Alice},
\emph{Bob}, and \emph{Eve}, equipped with $N_a$,
$N_b$ and $N_e$ antennas, respectively. \emph{Eve} is an active agent, who works in FD mode, i.e.,
it allocates $N_e^r$ antennas to receive signals and uses the remaining $N_e^t=N_e-N_e^r$
antennas to transmit isotropic noise, i.e., ${\bf z}_e$, with $E\{{\bf z}_e{\bf z}_e^H\} =({P}/{N_e^t}){\bf I}$.
\emph{Alice} wishes to send message
${\bf s} \sim \mathcal{CN}(\bf{0},\bf{I})$ to \emph{Bob} and keep it secret from
\emph{Eve}. Towards that objective, \emph{Bob} allocates $N_b^r$ antennas to receive the message
and uses the remaining $N_b^t=N_b-N_b^r$
antennas to transmit jamming signals, i.e., ${\bf z}_b$, with ${\bf z}_b \sim \mathcal{CN}(\bf{0},\bf{I})$. 
Since \emph{Bob} transmits noise while receiving the signal of interest, he generates self-interference, and so does \emph{Eve}.
While several self-interference cancelation techniques have been reported, such as antenna isolation, analog-circuit-domain based methods and
digital-domain based methods, 
full self-interference cancelation is still not achievable \cite{Sabharwal14}.
To describe the effect of residual self-interference
we employe the loop interference model of \cite{Gan13},
which quantifies the level of self-interference with a parameter $\rho \in [0,1]$, with $\rho=0$ denoting zero self-interference.

To improve the system performance, \emph{Alice} and \emph{Bob} will precode their transmissions, using precoding matrices ${\bf V}_a$ and ${\bf V}_b$, respectively.
The signal received at \emph{Bob} and \emph{Eve} can be respectively written as
\begin{align}
&{{\bf y}_b} = {\bf{H}}_{ba}{\bf V}_a{\bf s} + \sqrt{\rho_b}{{\bf{H}}_{bb}}{{\bf V}_b{\bf z}_b}+ {{\bf{H}}_{be}}{{\bf z}_e} + {{\bf{n}}_b }, \label{eq1} \\
&{{\bf{y}}_e} = {{\bf G}_{ea}}{\bf V}_a{\bf s} + {{\bf G}_{eb}}{{\bf V}_b{\bf z}_b}+ \sqrt{\rho_e}{{\bf G}_{ee}}{{\bf z}_e} + {{\bf{n}}_e}, \label{eq0}
\end{align}
where 
${{\bf{n}}_b} \sim \mathcal{CN}(\bf{0},\bf{I}) $
and ${{\bf{n}}_e} \sim \mathcal{CN}(\bf{0},\bf{I})$
represent additive white Gaussian noise (AWGN) vectors at \emph{Bob} and \emph{Eve}, respectively;
${\bf{H}}_{ba}\in\mathbb{C}^{N_b^r \times N_a}$
and ${\bf{H}}_{be}\in\mathbb{C}^{N_b^r \times N_e^t}$ denote the channel matrices from
\emph{Alice} and \emph{Eve} to \emph{Bob}, respectively;
${\bf{G}}_{ea}\in\mathbb{C}^{N_e^r \times N_a}$ and ${\bf{G}}_{eb}\in\mathbb{C}^{N_e^r \times N_b^t}$
denote the channel matrices from \emph{Alice} and \emph{Bob} to \emph{Eve}, respectively;
${\bf{H}}_{bb}\in\mathbb{C}^{N_b^r \times N_b^t}$ and ${\bf{G}}_{ee}\in\mathbb{C}^{N_e^r \times N_e^t}$
represent the self-interference channel matrices at \emph{Bob} and \emph{Eve}, respectively;
$\rho_b$ and $\rho_e$ denote the self-interference level of \emph{Bob} and \emph{Eve}, respectively.
The transmitted signals including the message signal $\bf s$ and the jamming signals ${\bf z}_b$ and ${\bf z}_e$
are independent of each other, and independent of the noise ${\bf n}_b$ and ${\bf n}_e$.
Since \emph{Alice} and \emph{Bob} are not expected to cooperate with \emph{Eve}, \emph{Eve} cannot do any precoding.
The only way \emph{Eve} can affect the achievable S.D.o.F. is by optimizing its transmit/receive antenna allocation.

In the above, the Gaussian signaling assumption is made in order to maximize the achievable secrecy transmission rate \cite{Liu09, Liu10}.
Also, the flat fading assumption used in (\ref{eq1}), (\ref{eq0})
is valid 
when the coherence bandwidth of the channel is larger than the bandwidth of the transmitted signal \cite{David06}.
Here we assume that all channels are known at the legitimate nodes, including the CSI for \emph{Eve}. This is possible
in situations in which \emph{Eve} is an active network user and its whereabouts and behavior can be monitored.



For a given precoding matrix pair $({\bf V}_a, {\bf V}_b)$,
the maximum achievable rate at \emph{Bob} and \emph{Eve} can be respectively expressed as \cite{OggierBabak11}
\begin{subequations}
\begin{align}
& R_b = {\rm {log}}|{\bf I}+({\bf I}+{\bf W}_b)^{-1}{\bf{H}}_{ba}{\bf Q}_a{\bf{H}}_{ba}^H|, \label{eq3a}\\ %
& R_e = {\rm {log}}|{\bf I}+({\bf I}+{\bf W}_e)^{-1}{\bf{G}}_{ea}{\bf Q}_a{\bf{G}}_{ea}^H|, \label{eq3b}
\end{align}
\end{subequations}
where ${\bf{Q}}_a\triangleq{\bf V}_a{\bf V}_a^H$ and ${\bf Q}_b\triangleq{\bf V}_b{\bf V}_b^H$
denote the input covariance matrices at \emph{Alice} and \emph{Bob}, respectively, with
the average transmit power budget ${\rm tr}\{{\bf{Q}}_a\}={\rm tr}\{{\bf{Q}}_b\}=P$;
the interference covariance matrices at \emph{Bob} and \emph{Eve} respectively are
\begin{align}
&{\bf W}_b\triangleq \rho_b {\bf{H}}_{bb}{\bf Q}_b {\bf{H}}_{bb}^H+\frac{P}{N_e^t}{\bf{H}}_{be}{\bf{H}}_{be}^H, \nonumber\\
&{\bf W}_e\triangleq{\bf{G}}_{eb}{\bf Q}_b {\bf{G}}_{eb}^H+\frac{\rho_e P}{N_e^t}{\bf{G}}_{ee}{\bf{G}}_{ee}^H. \nonumber
\end{align}

Correspondingly,
the achievable S.D.o.F., representing the high SNR behavior of the achievable secrecy rate \cite{Liang09}, is 
\begin{align}
d_{s,a}({\bf{Q}}_a,{\bf{Q}}_b) \triangleq \mathop{\lim }\limits_{ P \to \infty }
\dfrac{R_b-R_e}{{\rm log} \ P}, \label{eq4}
\end{align}
provided that a positive secrecy rate can be achieved.

The goal of this paper is to determine the maximum achievable S.D.o.F. over the transmission schemes at \emph{Alice} and
\emph{Bob}, i.e., the antenna allocation at \emph{Bob} and the precoding matrices of \emph{Alice} and
\emph{Bob}. To that goal, in the following,
we will first determine the optimal number of transmit/receive antennas at \emph{Bob}, based on which we then analytically determine the
maximum achievable S.D.o.F.. 
Subsequently, we find the worst-case achievable S.D.o.F. for the adverse scenario, in which
\emph{Eve} is smart and tries to minimize the achievable S.D.o.F. by adjusting
the number of transmit/receive antennas.


\section{The Maximum Achievable S.D.o.F.}
In \cite{Lingxiang16,Lingxiang162}, we determined the maximum achievable S.D.o.F. for a helper-assisted Gaussian wiretap channel,
which consists of a source equipped with $N_s$ antennas, a
legitimate receiver equipped with $N_d$ antennas, a passive eavesdropper equipped with $N_{ep}$ antennas,
and an external helper (sending jamming signals to confuse \emph{Eve}) equipped with $N_h$ antennas.
In that scenario, the main idea for achieving the maximum S.D.o.F. is to include into the source and helper precoding matrix pair
the maximum possible linearly precoding vector pairs along which the message and jamming signals are
aligned into the same received subspace of \emph{Eve}, subject to the constraint that
the total number of signal streams \emph{Bob} can see is no greater than its total number of receive antennas.
The achievable S.D.o.F. equals the number of precoding vectors that has been included into the source precoding matrix.
For easy reference the helper-assisted Gaussian wiretap channel studied in \cite{Lingxiang16} is depicted in Fig. 1(b).
As we will show next, the maximum achievable S.D.o.F. of the wiretap channel
of Fig. 1(a) is equal to that of the wiretap channel of  Fig. 1(b) with parameters as given in the following
proposition.

\begin{proposition}
\textit{Provided that $N_e^t<\min\{N_b^r, N_e^r\}$, the maximum achievable S.D.o.F. of the
MIMO Gaussian wiretap channel of Fig. 1(a), is equal to that of a helper-assisted wiretap channel of Fig. 1(b),
with $N_s=N_a$, $ N_h=N_b^t$, $N_d=N_b^r-N_e^t$ and
$N_{ep}=N_e^r-N_e^t$.}
\end{proposition}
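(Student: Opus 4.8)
The plan is to prove the claimed equivalence in two directions: every S.D.o.F.\ achievable in the helper-assisted channel of Fig.~1(b) with the stated parameters is achievable in Fig.~1(a), and conversely. The bridge is the observation that, at high SNR, the active noise ${\bf z}_e$ costs \emph{Bob} exactly $N_e^t$ receive dimensions and costs \emph{Eve} exactly $N_e^t$ receive dimensions, while leaving the remaining degree-of-freedom structure of the two channels identical.

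First I would record the generic-rank facts that drive the reduction. Since $N_e^t<\min\{N_b^r,N_e^r\}$, the matrices ${\bf H}_{be}\in\mathbb{C}^{N_b^r\times N_e^t}$ and ${\bf G}_{ee}\in\mathbb{C}^{N_e^r\times N_e^t}$ almost surely have full column rank $N_e^t$, and they are statistically independent of $({\bf H}_{ba},{\bf H}_{bb})$ and of $({\bf G}_{ea},{\bf G}_{eb})$, respectively. Let ${\bf U}_b\in\mathbb{C}^{N_b^r\times(N_b^r-N_e^t)}$ and ${\bf U}_e\in\mathbb{C}^{N_e^r\times(N_e^r-N_e^t)}$ be orthonormal bases of the orthogonal complements of the column spaces of ${\bf H}_{be}$ and ${\bf G}_{ee}$. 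I claim the S.D.o.F.\ is unchanged if \emph{Bob} replaces ${\bf y}_b$ by ${\bf U}_b^H{\bf y}_b$ and \emph{Eve} replaces ${\bf y}_e$ by ${\bf U}_e^H{\bf y}_e$. On one hand, projecting cannot increase $R_b$ or $R_e$; on the other hand, in each of the $N_e^t$ discarded directions at \emph{Bob} the interference covariance contains $\frac{P}{N_e^t}{\bf H}_{be}{\bf H}_{be}^H$, which grows linearly in $P$, and likewise in the discarded directions at \emph{Eve} the term $\frac{\rho_e P}{N_e^t}{\bf G}_{ee}{\bf G}_{ee}^H$ does, so the SINR in every discarded direction remains $O(1)$ regardless of how ${\bf Q}_a$ and ${\bf Q}_b$ are chosen, and those directions therefore contribute nothing to the pre-log of $R_b$ or of $R_e$. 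Hence $\lim_{P\to\infty}R_b/({\rm log}\ P)$ and $\lim_{P\to\infty}R_e/({\rm log}\ P)$ are the same before and after projection. After projection, (\ref{eq1})--(\ref{eq0}) are exactly the input/output relations of Fig.~1(b): \emph{Alice} is the source with $N_s=N_a$ antennas sending ${\bf V}_a{\bf s}$; \emph{Bob}'s transmit array is the helper with $N_h=N_b^t$ antennas sending ${\bf V}_b{\bf z}_b$; the legitimate receiver has $N_d=N_b^r-N_e^t$ antennas and effective channels ${\bf U}_b^H{\bf H}_{ba}$, $\sqrt{\rho_b}\,{\bf U}_b^H{\bf H}_{bb}$; the passive eavesdropper has $N_{ep}=N_e^r-N_e^t$ antennas and effective channels ${\bf U}_e^H{\bf G}_{ea}$, ${\bf U}_e^H{\bf G}_{eb}$. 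Because ${\bf U}_b$ and ${\bf U}_e$ are fixed full-rank matrices independent of the remaining channels, these effective channels are again generic, so the S.D.o.F.\ characterization of \cite{Lingxiang16} applies to the reduced channel verbatim.

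Given the reduction, both inclusions follow. For achievability in Fig.~1(a): take an S.D.o.F.-optimal source/helper precoding pair for the reduced channel supplied by \cite{Lingxiang16}, let \emph{Alice} and \emph{Bob} employ it with \emph{Bob}'s $(N_b^r,N_b^t)$ antenna split; \emph{Bob}'s rate $R_b$ is at least that of his projected observation, while \emph{Eve}'s rate $R_e$ has the same pre-log as that of her projected observation, so $\lim (R_b-R_e)/({\rm log}\ P)$ is at least the S.D.o.F.\ of the reduced scheme. For the converse: given any transmission scheme for Fig.~1(a), apply the two projections; by the pre-log invariance the scheme has the same S.D.o.F.\ when viewed on the reduced channel, which is a legitimate scheme for Fig.~1(b) and hence has S.D.o.F.\ no larger than the maximum for Fig.~1(b). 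The two bounds coincide, which proves the proposition.

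The step I expect to be the main obstacle is the pre-log invariance under projection: ruling out that \emph{Alice} or \emph{Bob} could gain degrees of freedom by steering power into the $N_e^t$-dimensional subspaces carrying the active noise. This needs a careful high-SNR analysis of ${\rm log}|{\bf I}+({\bf I}+{\bf W})^{-1}{\bf H}{\bf Q}{\bf H}^H|={\rm log}\frac{|{\bf I}+{\bf W}+{\bf H}{\bf Q}{\bf H}^H|}{|{\bf I}+{\bf W}|}$ when ${\bf W}$, ${\bf Q}$, and therefore ${\bf H}{\bf Q}{\bf H}^H$, all grow linearly in $P$: one shows its pre-log equals the dimension of the part of the $P$-scaling column space of ${\bf H}{\bf Q}{\bf H}^H$ that is not already inside the $P$-scaling column space of ${\bf W}$, and that the latter always contains ${\rm col}({\bf H}_{be})$ (respectively ${\rm col}({\bf G}_{ee})$) — which is exactly where $N_e^t<\min\{N_b^r,N_e^r\}$ and the independence of the channel matrices enter. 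The remaining bookkeeping — matching the reduced channel to Fig.~1(b), including the degenerate cases $\rho_b=0$ or $\rho_e=0$ — is routine.
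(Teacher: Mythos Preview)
Your proposal is correct and takes essentially the same approach as the paper: project \emph{Bob}'s and \emph{Eve}'s observations onto the orthogonal complements of ${\rm col}({\bf H}_{be})$ and ${\rm col}({\bf G}_{ee})$, and argue that the pre-log of $R_b-R_e$ is unchanged. The paper carries out the pre-log-invariance step by an explicit SVD decomposition of $({\bf I}+\frac{P}{N_e^t}{\bf H}_{be}{\bf H}_{be}^H)^{-1}$ (and analogously at \emph{Eve}), obtaining the equality of the two S.D.o.F.'s directly for every fixed normalized $(\bar{\bf Q}_a,\bar{\bf Q}_b)$, rather than splitting into achievability and converse as you do.
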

\begin{proof}
See Appendix A.
\end{proof}

\emph{Remark 1:} Based on \emph{Proposition 1}, one can see that if $N_e^t<\min\{N_b^r, N_e^r\}$
the maximum S.D.o.F. of the system under consideration can be determined based on results on the helper-assisted wiretap channel.
Otherwise, if $N_e^t \ge N_b^r$ and independent of $N_e^r$,
the maximum achievable S.D.o.F. is zero, since \emph{Bob} already cannot see any interference-free subspaces;
if $N_e^t \ge N_e^r$, \emph{Eve} cannot see any interference-free subspaces, and so the maximum achievable S.D.o.F. is equal to $\min\{(N_b^r-N_e^t)^+, N_a\}$. Therefore, for the purpose of computing the maximum achievable S.D.o.F. of the system under consideration, we only need
to investigate that of the corresponding helper-assisted wiretap channel.

Next, we show that for a fixed total number of helper and destination antennas, i.e., $ N_h+N_d=N_{\rm sum}$,
one can find a solution for the number of helper antennas which achieves the maximum S.D.o.F..
Details are given in the following proposition.

\begin{proposition}
\textit{Consider the helper-assisted wiretap channel of Fig. 1(b).
Suppose that $N_h$ and $N_d$ can vary but their sum is always fixed at $N_{\rm sum}$.
Then, the maximum achievable S.D.o.F. is
\begin{align}
{\small d_{s,p} = \min\{\delta, N_{\rm sum}, N_s \},  \label{eq5}}
\end{align}
where $\delta \triangleq \lfloor \frac{(N_{\rm sum}-|N_s-N_{ep}|)^+}{3}\rfloor+(N_s-N_{ep})^+$.
\begin{enumerate}
\item If $N_{\rm sum} \le N_{ep}-N_s$, the maximum
achievable S.D.o.F. is zero for any pair of $(N_h, N_d)$. 
\item If $N_{\rm sum} \le N_s-N_{ep}$, the maximum S.D.o.F. is achieved when
$N_d=N_{\rm sum}$ with no antennas being allocated to the helper.
\item If $N_{\rm sum} > |N_s-N_{ep}|$, the maximum S.D.o.F. is achieved
when $N_h=\hat N_h$, where
\begin{align}
\hat N_h=\left\{ {\begin{array}{*{20}{c}}
{N_{ep}-N_s+\lfloor \frac{N_{\rm sum}-|N_s-N_{ep}|}{3}\rfloor}&{\rm if}\ { N_s \le N_{ep}},\\
\lfloor \frac{N_{\rm sum}-|N_s-N_{ep}|}{3}\rfloor &{\rm if}\ {N_s > N_{ep}},
\end{array}} \right. \nonumber
\end{align}
and the remaining $N_{\rm sum}-\hat N_h$ antennas are assigned to the legitimate receiver.
\end{enumerate}}
\end{proposition}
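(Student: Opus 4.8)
The plan is to reduce the statement to the closed-form expression for the maximum achievable S.D.o.F. of the helper-assisted wiretap channel with a \emph{fixed} antenna configuration, established in \cite{Lingxiang16}, and then to carry out the discrete optimization over the split $N_h+N_d=N_{\rm sum}$. Write $d_s(N_h,N_d)$ for the maximum S.D.o.F. of the configuration $(N_s,N_h,N_d,N_{ep})$. The first step I would take is to recast $d_s(N_h,N_d)$ in the resource-accounting form made explicit by the aligned-jamming scheme of \cite{Lingxiang16}: $(N_s-N_{ep})^+$ message streams can be steered into the null space of the \emph{Alice}--\emph{Eve} channel, are therefore invisible to \emph{Eve}, and cost one receive antenna at \emph{Bob} each and no helper antenna; every further (``covered'') stream additionally requires one jamming stream from the helper to cover it at \emph{Eve}, hence one more helper antenna, plus one more receive antenna at \emph{Bob} to zero-force that jamming stream --- three antennas drawn from the $N_{\rm sum}$ pool; and, when $N_s<N_{ep}$, the helper must carry a fixed overhead of $N_{ep}-N_s$ extra antennas so that the reach of the helper--\emph{Eve} channel is wide enough to align the jamming onto the message subspace seen by \emph{Eve}. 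Because $(N_s-N_{ep})^++(N_{ep}-N_s)^+=|N_s-N_{ep}|$, a covered-stream count $c$ is feasible only if $N_{\rm sum}\ge|N_s-N_{ep}|+3c$, and one always has $d_s\le N_d\le N_{\rm sum}$ and $d_s\le N_s$.

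Given this, the converse of \emph{Proposition 2} is immediate: for every admissible split, $d_s(N_h,N_d)\le(N_s-N_{ep})^++\lfloor(N_{\rm sum}-|N_s-N_{ep}|)^+/3\rfloor=\delta$, and also $d_s\le N_{\rm sum}$ and $d_s\le N_s$, so $\max_{N_h+N_d=N_{\rm sum}}d_s(N_h,N_d)\le\min\{\delta,N_{\rm sum},N_s\}$. What remains is to exhibit a split attaining this bound. I would take $N_h=\hat N_h$, $N_d=N_{\rm sum}-\hat N_h$ with $\hat N_h$ as in the statement, and verify $d_s(\hat N_h,N_{\rm sum}-\hat N_h)=\min\{\delta,N_{\rm sum},N_s\}$ by Euclidean division: writing $N_{\rm sum}-|N_s-N_{ep}|=3q+r$ with $r\in\{0,1,2\}$, the choice $\hat N_h$ gives the helper exactly $q+(N_{ep}-N_s)^+$ antennas --- the $q$ covering-jamming streams plus the overhead --- and leaves \emph{Bob} with $2q+(N_s-N_{ep})^++r$ antennas, which is precisely enough to carry $q$ covered streams (two antennas each), $(N_s-N_{ep})^+$ free streams (one antenna each), and still have slack $r$. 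The delivered value is $q+(N_s-N_{ep})^+=\delta$, truncated only by $N_s$; it is never truncated by $N_{\rm sum}$, since $\delta>N_{\rm sum}$ would force $N_s-N_{ep}>N_{\rm sum}$, i.e. one is in Case~2.

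The three enumerated sub-claims then follow by specialization. In Case~1, $N_{\rm sum}\le N_{ep}-N_s$ makes both $(N_{\rm sum}-|N_s-N_{ep}|)^+$ and $(N_s-N_{ep})^+$ vanish, so $\delta=0$ for every $(N_h,N_d)$. In Case~2, $N_{\rm sum}\le N_s-N_{ep}$ gives $|N_s-N_{ep}|\ge N_{\rm sum}$, whence $\delta=(N_s-N_{ep})^+\ge N_{\rm sum}$ and $\min\{\delta,N_{\rm sum},N_s\}=N_{\rm sum}$; since $N_d=N_{\rm sum}$ already admits $N_{\rm sum}$ null-space streams that \emph{Eve} cannot see, no helper is needed and this split is optimal. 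Case~3 is the computation of the preceding paragraph, with the two branches of $\hat N_h$ corresponding to whether the $|N_s-N_{ep}|$ overhead antennas are charged to \emph{Bob} (when $N_s\ge N_{ep}$) or to the helper (when $N_s<N_{ep}$); to close it I would also note that any other integer split is suboptimal, because $d_s(N_h,N_d)$ is the minimum of a term non-decreasing in $N_d$ and a term non-decreasing in $N_h$, so moving an antenna across the balance point $\hat N_h$ can only decrease whichever term was the larger.

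The main obstacle is not conceptual --- the analytic input, that for a fixed antenna count the S.D.o.F. equals the resource-accounting expression above, is inherited from \cite{Lingxiang16} --- but purely combinatorial: one must handle the floor $\lfloor\cdot/3\rfloor$ and the integrality of $(N_h,N_d)$ with care, checking all three residues of $N_{\rm sum}-|N_s-N_{ep}|$ modulo $3$ and keeping the regimes $N_s\ge N_{ep}$ and $N_s<N_{ep}$ (which relocate the overhead to opposite sides of the split) bookkept separately, so that the stated $\hat N_h$ is confirmed to hit $\delta$ exactly and to be optimal among all integer splits.
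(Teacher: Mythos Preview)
Your approach is correct in spirit and reaches the same conclusion, but it is organized quite differently from the paper's proof. The paper starts from the explicit closed form of \cite{Lingxiang16},
\[
g(N_h)=\min\{d_{c=1}(N_h)+d_{c=2}^\star(N_h),N_d,N_s\},
\]
with the auxiliary quantities $s_1,s_2$, and then carries out an exhaustive case split: first $N_s\le N_{ep}$ versus $N_s>N_{ep}$, and within each, subcases on whether $\delta\ge N_s$, and then a further three-way split on the position of $N_h$ relative to $\hat N_h$ (and an auxiliary $\bar N_h$). Every branch is closed by direct algebraic manipulation of $s_1,s_2$ and the floor. Your route instead extracts a single ``resource-accounting'' inequality --- free streams cost one destination antenna, covered streams cost three pooled antennas, plus a fixed overhead of $|N_s-N_{ep}|$ --- and reads off $g(N_h)\le\delta$ uniformly in $N_h$; achievability at $\hat N_h$ is then a one-line Euclidean-division check. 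This is cleaner and makes the form of $\delta$ transparent, whereas the paper's argument is more mechanical but requires no interpretive step. The price you pay is that the resource-accounting bound is not literally what \cite{Lingxiang16} states: that reference gives $g(N_h)$ in the $d_{c=1},d_{c=2}^\star$ form, so your ``recasting'' step still has to be justified, and doing so rigorously will in practice force you through much the same case analysis the paper performs (in particular, the $s_1$ term, which is nonzero when $N_h>N_d$, does not fit the simple three-antennas-per-covered-stream picture and needs separate handling).

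One small correction: your closing remark that ``any other integer split is suboptimal'' via monotonicity is not right and is not needed. The paper in fact exhibits a second maximizer $\bar N_h\ne\hat N_h$ achieving the same value $d_{s,p}$ (and Fig.~2 illustrates several optimal $N_b^t$ for a fixed $N_e^t$), so the monotonicity heuristic fails as stated. Since your converse already gives $g(N_h)\le\min\{\delta,N_{\rm sum},N_s\}$ for every $N_h$, and your achievability shows $g(\hat N_h)$ attains this value, the optimality of $\hat N_h$ is already established without any uniqueness or monotonicity claim; you should simply drop that sentence.
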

\begin{proof}
See Appendix B.
\end{proof}

Combining \emph{Proposition 1} and \emph{Proposition 2}, we can determine
the maximum achievable S.D.o.F. for the system under consideration as follows.

\begin{figure}[!t]
\centering
\includegraphics[width=3in]{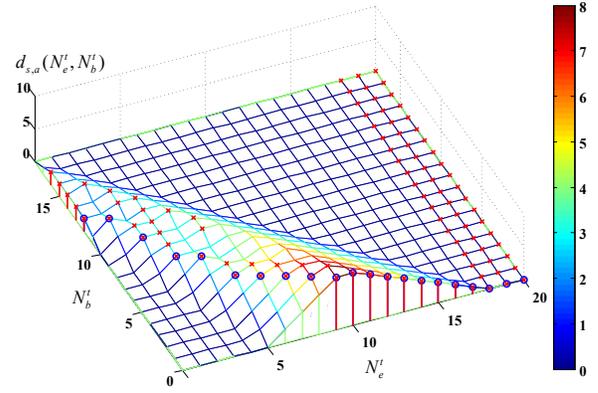}
\DeclareGraphicsExtensions. \caption{The maximum achievable S.D.o.F. for the system with $N_a=10$, $N_b=18$ and $N_e=20$.}
\vspace* {-6pt}
\end{figure}

\begin{theorem}
\textit{Consider a MIMO Gaussian wiretap channel, as depicted in Fig. 1(a).
The maximum achievable S.D.o.F. is
\begin{align}
\small d_{s,a}(N_e^t) = \left\{ {\begin{array}{*{20}{c}}
\min\{(N_b-N_e^t)^+, N_a\} &{\rm if}\ {N_e^t \ge N_e^r}, \\
\min\{\eta, (N_b-N_e^t)^+, N_a\}&{\rm if}\ {N_e^t < N_e^r},
\end{array}} \right. \label{eq6}
\end{align}
with $\eta \triangleq \lfloor \frac{(N_b-N_e^t-|N_a-N_e^r+N_e^t|)^+}{3}\rfloor+(N_a-N_e^r+N_e^t)^+$.
The maximum S.D.o.F. is achieved when Bob uses ${N_b^t}^\star$ antennas to transmit,
with ${N_b^t}^\star$ given in (\ref{eqNbt}) at the top of the next page, and the remaining
$N_b-{N_b^t}^\star$ antennas receive.}
\end{theorem}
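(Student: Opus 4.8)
The plan is to prove Theorem 1 by optimizing over \emph{Bob}'s antenna split $(N_b^t, N_b^r)$ with $N_b^t + N_b^r = N_b$, using \emph{Proposition 1}, \emph{Proposition 2} and \emph{Remark 1} as building blocks. I would first separate the two regimes $N_e^t \ge N_e^r$ and $N_e^t < N_e^r$, because \emph{Proposition 1} reduces the problem to the helper-assisted channel only in the second regime.

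In the regime $N_e^t \ge N_e^r$, \emph{Remark 1} states that for a fixed split the maximum achievable S.D.o.F. equals $\min\{(N_b^r - N_e^t)^+, N_a\}$, which is non-decreasing in $N_b^r$. Hence it suffices for \emph{Bob} to keep all antennas in receive mode, i.e. the choice ${N_b^t}^\star = 0$ is optimal, and the S.D.o.F. becomes $\min\{(N_b - N_e^t)^+, N_a\}$, the first branch of (\ref{eq6}).

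In the regime $N_e^t < N_e^r$, I would partition \emph{Bob}'s strategies into those with $N_b^r \le N_e^t$ and those with $N_b^r > N_e^t$. For the former, \emph{Remark 1} gives zero S.D.o.F.. For the latter, the hypothesis $N_e^t < \min\{N_b^r, N_e^r\}$ of \emph{Proposition 1} holds, so the problem is equivalent to a helper-assisted wiretap channel with $N_s = N_a$, $N_h = N_b^t$, $N_d = N_b^r - N_e^t$ and $N_{ep} = N_e^r - N_e^t$. The key observation is that moving one antenna at \emph{Bob} from receive to transmit decreases $N_d$ by one and increases $N_h$ by one, so $N_h + N_d = N_b - N_e^t =: N_{\rm sum}$ is invariant, while $N_{ep} = N_e^r - N_e^t$ does not depend on the split at all. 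Consequently, optimizing over \emph{Bob}'s allocation is precisely the optimization solved by \emph{Proposition 2} with $N_{\rm sum} = N_b - N_e^t$, $N_s = N_a$, $N_{ep} = N_e^r - N_e^t$. Substituting these values into (\ref{eq5}) turns $\delta$ into $\eta$ and yields $\min\{\eta, N_b - N_e^t, N_a\}$ --- written as $\min\{\eta, (N_b - N_e^t)^+, N_a\}$ to absorb the degenerate case $N_b \le N_e^t$, where no strategy of the second kind exists and every strategy gives zero. Re-expressing the optimal helper size $\hat N_h$ from \emph{Proposition 2} via $N_b^t = N_h = \hat N_h$ then gives (\ref{eqNbt}).

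Two points must be closed to finish. First, one must verify that the \emph{Proposition 2}-optimal configuration is an admissible strategy for \emph{Bob}, i.e. that $\hat N_h \le N_{\rm sum} - 1$, so that $N_d = N_{\rm sum} - \hat N_h \ge 1 > 0$ and the \emph{Proposition 1} hypothesis really holds there (this also gives $0 \le {N_b^t}^\star \le N_b$); this is a one-line check on the floor expression in each branch of $\hat N_h$. Second, one must confirm that the discarded strategies ($N_b^r \le N_e^t$, S.D.o.F. zero) cannot beat the optimum, which is immediate since $\eta, N_a \ge 0$ and $N_b - N_e^t \ge 1$ whenever such strategies exist. The main obstacle is exactly this bookkeeping: tracking which antenna-split constraints are active, establishing the invariance of $(N_{\rm sum}, N_{ep})$, and checking that the $(\cdot)^+$ clippings in (\ref{eq6}) reproduce all the corner cases; once the reduction is in place, the substitution into \emph{Proposition 2} is routine.
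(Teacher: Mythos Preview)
Your proposal is correct and follows essentially the same route as the paper's proof: split into the regimes $N_e^t \ge N_e^r$ and $N_e^t < N_e^r$, dispose of the first via \emph{Remark 1} with ${N_b^t}^\star=0$, and in the second discard the splits with $N_b^r \le N_e^t$, then use \emph{Proposition 1} to reduce to a helper-assisted channel with the invariant pair $(N_{\rm sum},N_{ep})=(N_b-N_e^t,\,N_e^r-N_e^t)$ and invoke \emph{Proposition 2}. Your extra admissibility check that $\hat N_h \le N_{\rm sum}-1$ is a worthwhile addition the paper leaves implicit.
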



\begin{figure*}[!t]
\normalsize
\setcounter{equation}{6}
\begin{align}
{N_b^t}^\star = \left\{ {\begin{array}{*{20}{c}}
N_e^r-N_e^t-N_a+\lfloor \dfrac{N_b-N_e^t-|N_a-N_e^r+N_e^t|}{3}\rfloor&
{\rm if}\ N_e^t < \min\{N_e^r, N_b-|N_a-N_e^r+N_e^t|\} \ {\rm and}\ N_a \le N_e^r-N_e^t\\
\lfloor \dfrac{N_b-N_e^t-|N_a-N_e^r+N_e^t|}{3}\rfloor &
{\rm if}\ N_e^t < \min\{N_e^r, N_b-|N_a-N_e^r+N_e^t|\} \ {\rm and}\ N_a > N_e^r-N_e^t \\
{0}&{\rm otherwise}
\end{array}} \right. \label{eqNbt}
\end{align}
\setcounter{equation}{7}
\hrulefill
\end{figure*}

\begin{proof}
See Appendix C.
\end{proof}

\emph{Theorem 1} provides the number of transmit antennas at Bob which achieves the maximum S.D.o.F..
This is is illustrated in Fig. 2, where we plot the maximum achievable S.D.o.F. for the system with $N_a=10$, $N_b=18$ and $N_e=20$.
Specifically, for a given antenna number pair $(N_e^t, N_b^t)$, we plot the
achievable S.D.o.F. based on \emph{Remark 1}. For each fixed $N_e^t$, we find,
with the numerical search method, the points which achieve the maximum S.D.o.F., and mark them with red crosses.
Looking at the slice of the graph corresponding to a fixed $N_e^t$,
one can see that there are one or more $N_b^t$'s which achieve the maximum S.D.o.F.,
and ${N_b^t}^\star$ marked by a blue circle, coincides with one of those red crosses.

\subsection{The proposed transmission scheme which achieves the maximum S.D.o.F.}

With the optimal allocation of trnasmit/receive antennas at \emph{Bob}, we next
construct the pair $({\bf V}_a^\star, {\bf V}_b^\star)$ which achieves the maximum S.D.o.F..
\begin{enumerate}
\item For the case of ${N_b^t}^\star=0$, and along the lines of Appendix A,
one can see that the wiretap channel of Fig. 1(a) is equivalent to
a classic three-node wiretap channel, with the main channel and eavesdropping channel
being equal to ${{\bf U}_{b}^0}^H{\bf{H}}_{ba}$ and $ {{\bf U}_{e}^0}^H{\bf{G}}_{ea}$, respectively. Here,
${\bf U}_{b}^{0}$ and $ {\bf U}_{e}^{0}$ are the orthonormal basis of the null space of ${\bf H}_{be}$
and ${\bf G}_{ee}$, respectively. Therefore, by applying the precoding matrix design
of the three-node wiretap channel of \cite{Wornell11}, the maximum
S.D.o.F. can be achieved. According to \cite{Wornell11}, the precoding matrices are
constructed by selecting those linearly independent precoding vectors along which
the legitimate channel has better quality than the eavesdropping channel.
\item For the case of ${N_b^t}^\star \ne 0$, and along the lines of Appendix A, one can see that
the wiretap channel of Fig. 1(a) is equivalent to
a classic helper-assisted wiretap channel, with the channels to \emph{Bob} being equal to ${ {\bf U}_{b}^0}^H{\bf{H}}_{ba}$
and $ {{\bf U}_{b}^0}^H{\bf{H}}_{bb}$, the channels to \emph{Eve}
being equal to $ {{\bf U}_{e}^0}^H{\bf{G}}_{ea}$ and $ {{\bf U}_{e}^0}^H{\bf{G}}_{eb}$,
and the number of antennas being $N_s=N_a$, $ N_h=N_b^t$, $N_d=N_b^r-N_e^t$ and $N_{ep}=N_e^r-N_e^t$.
Therefore, by applying the precoding matrix design of \cite{Lingxiang16,Lingxiang162}
to this equivalent helper-assisted wiretap channel, the maximum S.D.o.F. can be achieved.
The main idea here is to select the maximum possible number of linearly independent precoding vector pairs along which the message and
jamming signals are aligned into the same received subspace of \emph{Eve}.
In particular, we divide the candidate set of precoding vector pairs into three subsets, i.e., C1, in which
the message signal sent by \emph{Alice} spreads within the null space of the eavesdropping channel,
C2, in which the message does not spread within the null space of the eavesdropping channel and
\emph{Bob} is self-interference free, and C3, in which the message does not spread
within the null space of the eavesdropping channel and
\emph{Bob} suffers from self-interference. We select precoding vector pairs from C1 first, followed by
C2 and then C3, until there are no more
candidate precoding vector pairs or the total number of signal streams \emph{Bob}
can see is equal to its total number of receive antennas.
For more details on determining the number of candidates of each subset and their formulas,
please refer to \cite{Lingxiang16,Lingxiang162}.
It is worth noting that (to be used in Section V) the formulas of the precoding vector pairs in C1 only depend on the channel matrix $ {{\bf U}_{e}^0}^H{\bf{G}}_{ea}$; the formulas of the precoding vector pairs in C3 only depend on the channel matrices
$ {{\bf U}_{e}^0}^H{\bf{G}}_{ea}$ and $ {{\bf U}_{e}^0}^H{\bf{G}}_{eb}$; in addition to $ {{\bf U}_{e}^0}^H{\bf{G}}_{ea}$ and $ {{\bf U}_{e}^0}^H{\bf{G}}_{eb}$, the formulas of the precoding vector pairs in C2
also depend on the channel matrix $ {{\bf U}_{b}^0}^H{\bf{H}}_{bb}$.
\end{enumerate}

\section{Worst-Case Achievable S.D.o.F. in the Presence of A Smart \emph{Eve}}
In this section, we consider a scenario in which \emph{Eve} knows the transmit strategies
at both \emph{Alice} and \emph{Bob}, and therefore it derives $d_{s,a}(N_e^t) $, based on
which it adjusts the number of its transmit antennas in order to minimize the
achievable S.D.o.F., i.e., $d_{s,a}(N_e^t) $. In that case, the worst-case maximum achievable S.D.o.F. is
\begin{align}
d_{s,a}^{\rm{wc}} = \mathop{\min }\limits_{ 0 \le N_e^t \le N_e } d_{s,a}(N_e^t). \label{eq7}
\end{align}

\begin{theorem}
\textit{Consider the MIMO Gaussian wiretap channel of Fig. 1(a).
Assume that {Eve} knows the transmit strategies at {Alice} and {Bob}.
Then, the maximum achievable S.D.o.F. is
given in (\ref{eq8}), which is shown at the top of next page.}
\end{theorem}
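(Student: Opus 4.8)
The plan is to minimize, over the integers $N_e^t\in\{0,1,\dots,N_e\}$, the closed-form expression for $d_{s,a}(N_e^t)$ provided by \emph{Theorem 1} in (\ref{eq6}) (we may assume $N_e\ge 1$, the case $N_e=0$ being trivial). The first step is to express the two branches of (\ref{eq6}) in $N_e^t$ alone: since $N_e^r=N_e-N_e^t$, the condition $N_e^t\ge N_e^r$ is equivalent to $N_e^t\ge\lceil N_e/2\rceil$, so the feasible set splits into $R_2=\{N_e^t:\ \lceil N_e/2\rceil\le N_e^t\le N_e\}$, on which $d_{s,a}(N_e^t)=\min\{(N_b-N_e^t)^+,N_a\}$, and $R_1=\{N_e^t:\ 0\le N_e^t\le\lceil N_e/2\rceil-1\}$, on which $d_{s,a}(N_e^t)=\min\{\eta,(N_b-N_e^t)^+,N_a\}$ with $\eta$ as in (\ref{eq6}).

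On $R_2$ the objective is non-increasing in $N_e^t$, so its minimum over $R_2$ is $d_{s,a}(N_e)=\min\{(N_b-N_e)^+,N_a\}$, the value corresponding to \emph{Eve} devoting every antenna to jamming. The core of the proof is the behaviour on $R_1$, and it hinges on one technical fact: $\eta$, as a function of $N_e^t$, is non-decreasing on $R_1$. To establish this I would substitute $N_a-N_e^r+N_e^t=N_a-N_e+2N_e^t$ and split on the sign of this quantity. For $N_a-N_e+2N_e^t\le 0$ one has $\eta=\lfloor (N_b-N_e+N_a+N_e^t)^+/3\rfloor$; for $N_a-N_e+2N_e^t\ge 0$, using $\lfloor (m-3k)/3\rfloor=\lfloor m/3\rfloor-k$ for integers $m,k$, one has $\eta=\lfloor (N_b+N_e-N_a)/3\rfloor+(N_a-N_e)+N_e^t$ while the inner positive part is active and $\eta=(N_a-N_e)+2N_e^t$ once it is exhausted. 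Each of these three pieces is plainly non-decreasing in $N_e^t$, and a short verification at the two breakpoints (where the inner positive part vanishes, and where $N_a-N_e+2N_e^t$ changes sign) shows the value never drops. I expect this floor-and-absolute-value bookkeeping to be the one delicate point of the argument.

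Given the monotonicity of $\eta$, no enumeration over $R_1$ is needed. Fix $N_e^t\in R_1$: the minimum defining $d_{s,a}(N_e^t)$ is attained either by $\eta(N_e^t)$, in which case $d_{s,a}(N_e^t)\ge\eta(0)\ge d_{s,a}(0)$, or by $\min\{(N_b-N_e^t)^+,N_a\}$, in which case $d_{s,a}(N_e^t)\ge\min\{(N_b-N_e)^+,N_a\}=d_{s,a}(N_e)$, because $N_e^t<N_e$ forces $(N_b-N_e^t)^+\ge(N_b-N_e)^+$. The same inequality $(N_b-N_e^t)^+\ge(N_b-N_e)^+$ handles $R_2$ at once. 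Hence $d_{s,a}(N_e^t)\ge\min\{d_{s,a}(0),d_{s,a}(N_e)\}$ for every feasible $N_e^t$, and since $N_e^t=0$ and $N_e^t=N_e$ are themselves feasible, (\ref{eq7}) gives $d_{s,a}^{\rm{wc}}=\min\{d_{s,a}(0),d_{s,a}(N_e)\}$.

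It remains to evaluate the two corner values. With $N_e^t=0$ (\emph{Eve} a pure eavesdropper) one has $\eta(0)=\lfloor (N_b-|N_a-N_e|)^+/3\rfloor+(N_a-N_e)^+$, hence $d_{s,a}(0)=\min\{\eta(0),N_b,N_a\}$, while $d_{s,a}(N_e)=\min\{(N_b-N_e)^+,N_a\}$. Taking the smaller of the two, and observing that the term $N_b$ is never the unique minimizer and so may be dropped, gives $d_{s,a}^{\rm{wc}}=\min\{\lfloor (N_b-|N_a-N_e|)^+/3\rfloor+(N_a-N_e)^+,\ (N_b-N_e)^+,\ N_a\}$. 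Expanding this minimum according to the signs of $N_a-N_e$ and $N_b-N_e$ produces the case form displayed in (\ref{eq8}); this last expansion is routine.
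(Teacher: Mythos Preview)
Your proof is correct and follows essentially the same route as the paper: both decompose $\eta$ into the three monotone pieces (what the paper calls $m_1,m_2,m_3$, obtained from the thresholds $(N_e-N_a)/2$ and $(N_b+N_e-N_a)/3$), use that monotonicity to reduce the minimization to the endpoints $N_e^t\in\{0,N_e\}$, and then expand into the case list of (\ref{eq8}). Your packaging is somewhat tighter---you state the monotonicity of $\eta$ once and use the clean dichotomy ``minimum attained by $\eta$'' versus ``minimum attained by $\min\{(N_b-N_e^t)^+,N_a\}$'' to bound $d_{s,a}(N_e^t)$ from below by $\min\{d_{s,a}(0),d_{s,a}(N_e)\}$, whereas the paper carries out separate endpoint minimizations of $m_1,m_2,m_3$ in each parameter regime and compares them; but the underlying argument is the same.
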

\begin{proof}
See Appendix D.
\end{proof}

\emph{Theorem 2} enables us to make some interesting observations, which are given in the following Corollaries.
\begin{corollary}
\textit{For the purpose of minimizing the achievable S.D.o.F., {Eve} will jam or eavesdrop, but will not adopt a combination
of both.}
\end{corollary}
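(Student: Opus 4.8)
The plan is to show that $d_{s,a}(N_e^t)$, regarded as a function of the integer $N_e^t\in\{0,1,\ldots,N_e\}$, is single-peaked, i.e.\ non-decreasing on $\{0,\ldots,t^\star\}$ and non-increasing on $\{t^\star,\ldots,N_e\}$ for some $t^\star$. Granting this, the minimization (\ref{eq7}) is solved at one of the two endpoints $N_e^t=0$ or $N_e^t=N_e$: in the first case \emph{Eve} uses all of its antennas to receive (pure eavesdropping), in the second it uses all of them to transmit (pure jamming); in either case \emph{Eve} does not split its antennas between the two roles, which is the assertion. (The same conclusion can also be read off directly from the closed-form expression (\ref{eq8}) of \emph{Theorem 2}; the argument below is the self-contained version and is essentially the mechanism behind Appendix D.)

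The first step is to put (\ref{eq6}) into a single uniform form valid for all $N_e^t$. For $N_e^t\ge N_e^r$ we have $N_a-N_e^r+N_e^t\ge N_a$, hence $(N_a-N_e^r+N_e^t)^+\ge N_a$ and therefore $\eta\ge N_a\ge\min\{(N_b-N_e^t)^+,N_a\}$; thus adjoining $\eta$ to the minimum leaves its value unchanged in that regime. Consequently, for every $N_e^t\in\{0,\ldots,N_e\}$,
\[
d_{s,a}(N_e^t)=\min\bigl\{\eta(N_e^t),\,(N_b-N_e^t)^+,\,N_a\bigr\},
\]
where $\eta(\cdot)$ is the formula displayed in \emph{Theorem 1}, now read as a function of $N_e^t$ alone (recall $N_e^r=N_e-N_e^t$, so $N_a-N_e^r+N_e^t=N_a-N_e+2N_e^t$).

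The second step is an elementary monotonicity check of the three terms. The term $N_a$ is constant and $(N_b-N_e^t)^+$ is non-increasing in $N_e^t$. For $\eta$ I would split on the sign of $N_a-N_e+2N_e^t$: when it is non-positive, $\eta(N_e^t)=\lfloor(N_b+N_a-N_e+N_e^t)^+/3\rfloor$, which is non-decreasing; when it is non-negative, $\eta(N_e^t)=\lfloor(N_b+N_e-N_a-3N_e^t)^+/3\rfloor+(N_a-N_e+2N_e^t)$, in which a unit increase of $N_e^t$ changes the floor term by $0$ or $-1$ and the affine term by $+2$, a net change $\ge+1$. Hence $\eta(\cdot)$ is non-decreasing on the whole range. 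A pointwise minimum of a non-decreasing and a non-increasing function is single-peaked, and taking the pointwise minimum of a single-peaked function with a constant preserves single-peakedness; therefore $d_{s,a}(\cdot)$ is single-peaked, and its minimum over $\{0,\ldots,N_e\}$ is attained at $N_e^t=0$ or $N_e^t=N_e$.

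The step I expect to require the most care is verifying that $\eta(\cdot)$ really does not dip at the boundary between the two sign regimes of $N_a-N_e+2N_e^t$, and at the values of $N_e^t$ where the arguments of the $\lfloor\cdot\rfloor$ cross zero. These amount to finitely many elementary inequalities involving floors and $(\cdot)^+$, but they must be checked explicitly to make the ``non-decreasing'' claim airtight; once that is done, the single-peakedness, and hence the fact that an optimal antenna allocation for \emph{Eve} is always a pure one, follow immediately.
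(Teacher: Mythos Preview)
Your argument is correct. The paper's own proof simply defers to Appendix~D, where the worst-case S.D.o.F.\ is computed by an exhaustive three-regime case analysis of $d_{s,a}(N_e^t)$ (the regimes being delimited by the two thresholds $(N_e-N_a)/2$ and $(N_b+N_e-N_a)/3$), and one then reads off from those computations that the minimum is always attained at $N_e^t\in\{0,N_e\}$. Your route is different and more structural: you first unify the two branches of (\ref{eq6}) into a single formula valid for all $N_e^t$, then verify that $\eta(\cdot)$ is globally non-decreasing while $(N_b-N_e^t)^+$ is non-increasing, and conclude unimodality of $d_{s,a}(\cdot)$, hence that its minimum over $\{0,\ldots,N_e\}$ lies at an endpoint. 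This buys a clean one-line explanation of \emph{why} the corollary holds and would survive minor changes to the antenna-count parameters without redoing any casework; the paper's approach, by contrast, yields the explicit closed form (\ref{eq8}) as a by-product, which your argument does not produce. The boundary checks you flag (at the sign change of $N_a-N_e+2N_e^t$ and where the argument of the floor crosses zero) do go through: a unit step of $N_e^t$ across the sign boundary increases $\eta$ by at least~$1$, and once the floor argument is non-positive the floor term stays at~$0$ while the affine part keeps increasing. So there is no hidden gap.
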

\begin{proof}
From the proof of {\emph {Theorem 2}} in Appendix D, one can see that the minimum
value of $d_{s,a}(N_e^t)$ is obtained only when $N_e^t =0 $ or $N_e^t = N_e $.
This completes the proof.
\end{proof}

\begin{corollary}
\textit{If $N_b > N_e$, a positive S.D.o.F. can always be achieved with the proposed cooperative transmission scheme.}
\end{corollary}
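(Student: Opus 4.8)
The plan is to reduce the claim to checking two numbers. By Corollary 1 the worst-case value $d_{s,a}^{\rm{wc}}=\min_{0\le N_e^t\le N_e}d_{s,a}(N_e^t)$ is attained at one of the extremes $N_e^t=0$ or $N_e^t=N_e$, so it suffices to show that both $d_{s,a}(0)$ and $d_{s,a}(N_e)$ are strictly positive whenever $N_b>N_e$, and each of these is read off directly from the closed-form expression (\ref{eq6}).

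First I would handle $N_e^t=N_e$ (Eve a pure jammer): then $N_e^r=0\le N_e^t$, so the first branch of (\ref{eq6}) gives $d_{s,a}(N_e)=\min\{(N_b-N_e)^+,N_a\}$, and since $N_b>N_e$ we have $(N_b-N_e)^+=N_b-N_e\ge 1$ while $N_a\ge 1$, hence $d_{s,a}(N_e)\ge 1$. Then I would handle $N_e^t=0$ (Eve a pure eavesdropper), which is the more delicate case since it falls under the second branch of (\ref{eq6}) and involves the floor term. If $N_e=0$ Eve is inactive and the claim is immediate, so assume $N_e\ge 1$; then $N_e^t=0<N_e=N_e^r$ and $d_{s,a}(0)=\min\{\eta_0,N_b,N_a\}$ with $\eta_0=\lfloor (N_b-|N_a-N_e|)^+/3\rfloor+(N_a-N_e)^+$. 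I would split on the sign of $N_a-N_e$: if $N_a>N_e$ then already $(N_a-N_e)^+\ge 1$, so $\eta_0\ge 1$; if $N_a\le N_e$ then $\eta_0=\lfloor (N_b-N_e+N_a)^+/3\rfloor$, and since $N_b-N_e\ge 1$ the numerator is $N_b-N_e+N_a\ge N_a+1$, which is at least $3$ as soon as $N_a\ge 2$, giving $\eta_0\ge 1$. In both subcases $d_{s,a}(0)\ge\min\{1,N_b,N_a\}=1$, and together with the jammer case this yields $d_{s,a}^{\rm{wc}}=\min\{d_{s,a}(0),d_{s,a}(N_e)\}\ge 1>0$.

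The step I expect to be the main obstacle is exactly this floor term in the subcase $N_a\le N_e$: concluding $\lfloor (N_b-N_e+N_a)/3\rfloor\ge 1$ requires $N_b-N_e+N_a\ge 3$, which follows from $N_b-N_e\ge 1$ only after invoking $N_a\ge 2$; the bound is tight and genuinely fails in the single degenerate configuration $N_a=1$, $N_b=N_e+1$, where $\eta_0=\lfloor 2/3\rfloor=0$. So the corollary should be understood under the (standing, for a MIMO setup) non-degeneracy that Alice is equipped with more than one antenna — or, equivalently, one states it for $N_a\ge 2$ — after which the remaining manipulations of $(\cdot)^+$ and $\lfloor\cdot\rfloor$ are routine.
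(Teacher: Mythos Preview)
Your argument is correct and close in spirit to the paper's, though the paper's own proof is a single line: it simply invokes the closed-form worst-case expression~(\ref{eq8}) for $d_{s,a}^{\rm wc}$ and asserts that each branch is strictly positive when $N_b>N_e$. You instead go back one step to~(\ref{eq6}) and use Corollary~1 to reduce to the two endpoint evaluations $N_e^t\in\{0,N_e\}$; since~(\ref{eq8}) was derived precisely by minimizing~(\ref{eq6}) over $N_e^t$ and observing (as recorded in Corollary~1) that the minimum is attained at an endpoint, the two routes are really the same computation packaged differently. Your more explicit treatment has the merit of surfacing the degenerate configuration $N_a=1$, $N_b=N_e+1$, in which $\lfloor(N_b-N_e+N_a)/3\rfloor=0$ and the claim literally fails; this is consistent with~(\ref{eq8}) itself (the second branch there also returns zero in that case), so the corollary should indeed be read under the standing MIMO assumption $N_a\ge 2$, a caveat the paper's one-line verification leaves implicit.
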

\begin{proof}
With the expression of (\ref{eq8}), it can be verified that the worst-case achievable
S.D.o.F. is greater than zero for the case of $N_b > N_e$. This completes
the proof.
\end{proof}

\begin{figure*}[!t]
\normalsize
\setcounter{equation}{8}
\begin{align}
d_{s,a}^{\rm{wc}} = \left\{ {\begin{array}{*{20}{c}}
{0}&{\rm if}\ {N_e \ge N_b },\\
\min\{ \lfloor  \dfrac{N_b-N_e+N_a}{3}  \rfloor, N_b-N_e, N_a\}&{\rm if}\ \max\{\dfrac{N_b-N_a}{2}, N_a\} \le N_e < N_b, \\
\min\{ \lfloor \dfrac{N_b-N_a+N_e}{3}  \rfloor+N_a-N_e, N_b-N_e\} &{\rm if}\ {\dfrac{N_b-N_a}{2} \le N_e < \min\{ N_b, N_a\}}\ {\rm and}\ {N_e > N_a- N_b}, \\
N_b-N_e &{\rm if}\ {\dfrac{N_b-N_a}{2} \le N_e < \min\{ N_b, N_a\}}\ {\rm and}\ {N_e \le N_a- N_b}, \\
{N_a}&{\rm if}\ {N_e < \min\{\dfrac{N_b-N_a}{2}, N_b \} }.
\end{array}} \right. \label{eq8}
\end{align}
\setcounter{equation}{9}
\hrulefill
\end{figure*}

\section{Numerical Results}
As already mentioned, the achievable S.D.o.F. reveals the high SNR behavior of the achievable secrecy rate.
In this section, we consider a more realistic SNR scenario, and
demonstrate the secrecy rate performance of the proposed approach. In particular, we consider a scenario as shown in Fig. 3.
\emph{Alice} and \emph{Bob} are respectively fixed at coordinates $(-R,0)$ and $(R,0)$ (unit: meters). 
The smaller the $R$, the higher the received SNR at \emph{Bob} will be.
\emph{Eve} can move in one of the following two ways, i.e., parallel to the $x$-axis and between the points $(-20,-R)$ and $(20,-R)$, and
parallel to the $y$-axis and between the points $(0,10)$ and $(0,0)$.

\begin{figure}[!t]
\centering
\includegraphics[width=3in]{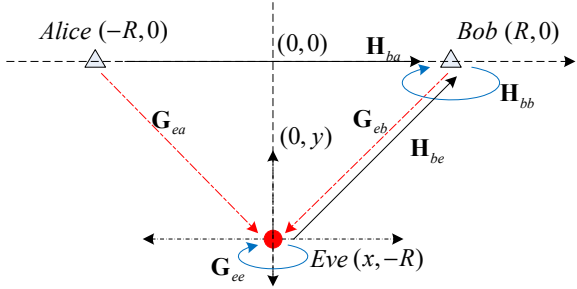}
\DeclareGraphicsExtensions. \caption{Model used for numerical experiments.}
\vspace* {-6pt}
\end{figure}

Unless otherwise specified, we consider the strong self-interference level $\rho_b=\rho_e=\rho=1$,
and we set $N_a=4$, $N_b=7$, $N_e^t=1$ and $N_e^r=5$. The transmit power of each node is $P= 0$dBm.
The noise power level is set as $\sigma^2=-60$dBm.
The power is equally allocated between different signal streams at each node.
According to \emph{Theorem 1}, for the above system, the maximum achievable S.D.o.F. of 2 can be achieved by choosing $N_b^t=2$, $N_b^r=5$.
Setting $N_b^t=2$, $N_b^r=5$, and according to Section III. A, one can see that
the system under consideration is equivalent to the helper-assisted wiretap channel of Fig. 1(b),
with the number of antennas being $N_s=4$, $ N_h=2$, $N_d=4$ and $N_{ep}=4$;
for that helper-assisted wiretap channel, the number of candidate precoding vector pairs in
C1, C2 and C3 are respectively 0, 0 and 2.
Following the construction method of Section III. A and since $N_d=4$ and for each precoding vector pair in C3 \emph{Bob} suffers from self-interference,
we can select two precoding vector pairs in C3 without violating the constraint that
the total number of signal streams \emph{Bob} can see is no greater than its total number of receive antennas. Therefore,
a total of two precoding vector pairs can be picked, and as such a number
of two message signal streams will be sent from \emph{Alice}.
We construct the precoding matrix pair assuming
exact knowledge of the channels.

With the precoding matrix pair,
we examine the achievable secrecy transmission rate, i.e., $(R_b-R_e)^+$,
where $R_b$ and $R_e$ are given by (\ref{eq3a}) and (\ref{eq3b}), respectively \cite{OggierBabak11}.
Results are obtained based on $1,000$ Monte Carlo runs.
In each run, the effect of the channel on the transmitted signal
is modeled by a multiplicative scalar of the form $d^{-c/2}e^{j\theta}$ \cite{Inaltekin09}, where
$d$ is the distance between the transmit and receive terminals, $c$ is the path loss exponent and $\theta$ is a random phase,
which is taken to be uniformly distributed
within $[0, 2 \pi)$ and independent between runs.
The value of $c$ is typically in the range of 2 to 4. In our simulations we set $c=3.5$.
We assume that the distance of different combinations of transmit-receive antennas corresponding to the same
link is the same, and as such the corresponding path loss is the same.

For comparison, we also plot the average achievable secrecy rate of
the half-duplex (HD) scheme, wherein \emph{Bob} receives with all of its antennas.
For the HD scheme, the precoding matrix of \emph{Alice} consists of
the generalized eigenvectors corresponding to the largest two generalized eigenvalues of the matrix pair \cite{Wornell11}
\begin{align}
\small (\hat{\bf{H}}_{ba}^H({\bf I}+\frac{P}{N_e^t}\hat{\bf{H}}_{be}\hat{\bf{H}}_{be}^H)^{-1}\hat{\bf{H}}_{ba},
\hat{\bf{G}}_{ea}^H({\bf I}+\frac{\rho_e P}{N_e^t}\hat{\bf{G}}_{ee}\hat{\bf{G}}_{ee}^H)^{-1}\hat{\bf{G}}_{ea}),  \label{eqMP}
\end{align}
where $\hat{\bf{H}}_{ba}$ and $\hat{\bf{H}}_{be}$ denote the channel matrices to \emph{Bob},
$\hat{\bf{G}}_{ea}$ and $\hat{\bf{G}}_{ee}$ represent the channel matrices to \emph{Eve}.
From Section III. A, the proposed transmission scheme in terms of the achievable
S.D.o.F. can be either equivalent with a three-node wiretap channel
when ${N_b^t}^\star=0$, or equivalent with a helper-assisted wiretap channel when ${N_b^t}^\star \ne 0$.
In the former case, the proposed scheme reduces to an HD scheme. In the latter case,
the proposed scheme always achieves a greater S.D.o.F..
For comparison fairness, in the HD scheme we consider selecting the same number of message signal streams as in the proposed scheme.

Figs. 4 and 5 illustrate the average achievable secrecy transmission rate
as function of \emph{Eve}'s position, with the $x$-coordinate
varying from $-20$ to $20$ and the $y$-coordinate fixed at $-R$. 
Fig. 4 corresponds to $R=10$, which represents a low SNR scenario for \emph{Bob}, while Fig. 5 corresponds to $R=1$,
which is a high SNR scenario for \emph{Bob}.
From Fig. 4, one can see that the proposed FD scheme performs overall better than the HD scheme,
except when \emph{Eve} is to the left of \emph{Alice} or to the right of \emph{Bob}.
The behavior in the latter cases should be expected, since when \emph{Eve} is to the left of \emph{Alice},
the received jamming signal is too weak to disturb \emph{Eve}'s channel.
As a result, the HD scheme, which uses all of \emph{Bob}'s antennas to receive, performs better.
When \emph{Eve} is to the right of \emph{Bob}, the received SNR is already small
even if \emph{Bob} does not send jamming signals, and as a result,
the HD scheme also performs better.
Naturally, for the higher SNR case, the advantage of the proposed FD approach is bigger and evident over the entire range (see Fig. 5).
To illustrate the secrecy rate advantage of using the proposed antenna allocation at \emph{Bob}, i.e., $N_b^t=2$ and $N_b^r=5$,
in Fig. 5 we also plot the achievable secrecy transmission rate for another allocation, i.e., $N_b^t=3$ and $N_b^r=4$;
in that case and according to Section III. A, one can see that only an S.D.o.F. of 1 can be achieved.
As expected, the achievable secrecy transmission rate
of that latter case is almost half of the proposed case, for which an S.D.o.F. of 2 can be achieved.

\begin{figure}[!t]
\centering
\includegraphics[width=3in]{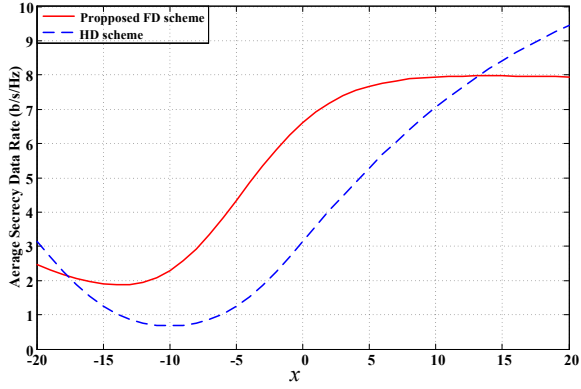}
\DeclareGraphicsExtensions. \caption{Average achievable secrecy rate versus the position of \emph{Eve} along the $x$-coordinate. The distance
parameter $R=10$.}
\vspace* {-6pt}
\end{figure}

\begin{figure}[!t]
\centering
\includegraphics[width=3in]{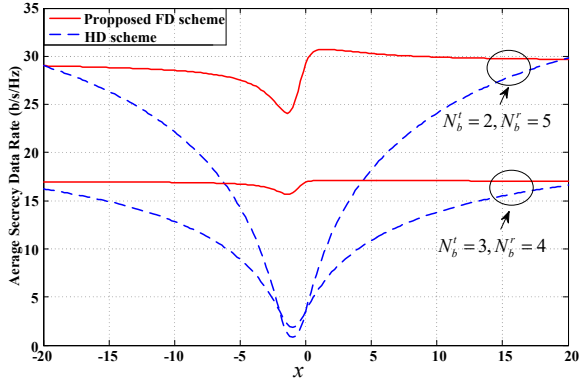}
\DeclareGraphicsExtensions. \caption{Average achievable secrecy rate versus the position of \emph{Eve} along the $x$-coordinate. The distance
parameter $R=1$.}
\vspace* {-6pt}
\end{figure}

In Fig. 6, we plot the average achievable secrecy transmission rate
versus the position of \emph{Eve} along the $y$-axis, for the case of $R=10$ and $R=5$.
The figure shows that for both cases, the achievable secrecy transmission
rate of the proposed FD scheme remains constant for all positions of \emph{Eve}. In contrast, the achievable secrecy transmission rate
of the HD scheme decreases as $y$ approaches zero.
This can be explained as follows. As \emph{Eve} comes closer
to \emph{Alice}, it receives a stronger signal, and as a result the secrecy rate of the HD scheme decreases.
On the other hand, in the proposed FD scheme, the message signal sent by \emph{Alice}
and the jamming signal sent by \emph{Bob} are aligned into the same received subspace of \emph{Eve}, thus keeping
\emph{Eve}'s eavesdropping capability constant, and as a result, keeping the achievable secrecy rate of
the proposed FD scheme constant.

\begin{figure}[!t]
\centering
\includegraphics[width=3in]{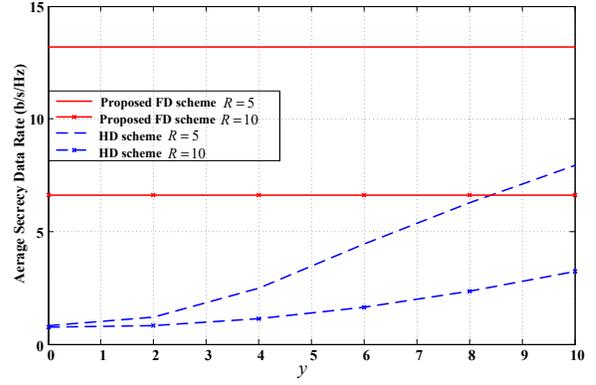}
\DeclareGraphicsExtensions. \caption{Average achievable secrecy rate versus the position of \emph{Eve} along the $y$-coordinate.}
\vspace* {-6pt}
\end{figure}

\begin{figure}[!t]
\centering
\includegraphics[width=3in]{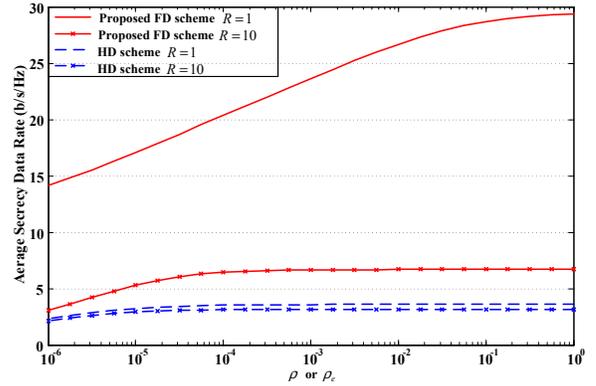}
\DeclareGraphicsExtensions. \caption{Average achievable secrecy rate versus the self-interference level.}
\vspace* {-3pt}
\end{figure}


Fig. 7 illustrates the average achievable secrecy transmission rate of the proposed scheme
as function of the self-interference level $\rho$, and that of the HD scheme as
function of the self-interference level $\rho_e$, for the case of $R=10$ and $R=1$.
We should note that since for the HD scheme \emph{Alice} determines its precoding matrix with (\ref{eqMP}),
the achievable secrecy transmission rate only relates to $\rho_e$.
One can see that the achievable secrecy rate of the FD scheme increases as $\rho$ increases.
This is because, by aligning the message and jamming signals into
the same received subspace of \emph{Eve}, the proposed scheme delivers a distorted message signal to \emph{Eve}, which
makes the eavesdropping channel more sensitive to self-interference.
Therefore, the achievable secrecy rate of the FD scheme increases with increasing level of self-interference.
While the achievable secrecy rate of the HD scheme 
also increases with increasing level of the self-interference at \emph{Eve},
the increase is small as compared to the proposed scheme.


In order to separately check the effect of the self-interference level, i.e., $\rho_b$ or $\rho_e$, on the achievable secrecy rate
performance of the proposed scheme, in Fig. 8, we set $\rho_e=10^{-3}$ and plot the average achievable secrecy transmission rate
versus the self-interference level $\rho_b$; also, we set $\rho_b=10^{-3}$ and plot the average achievable secrecy transmission rate
versus the self-interference level $\rho_e$.
One can see that the achievable secrecy transmission rate decreases slightly with $\rho_b$,
while it increases drastically with $\rho_e$.
This can also be explained by the fact that,
for the FD scheme the eavesdropping channel is more sensitive to self-interference.

\begin{figure}[!t]
\centering
\includegraphics[width=3in]{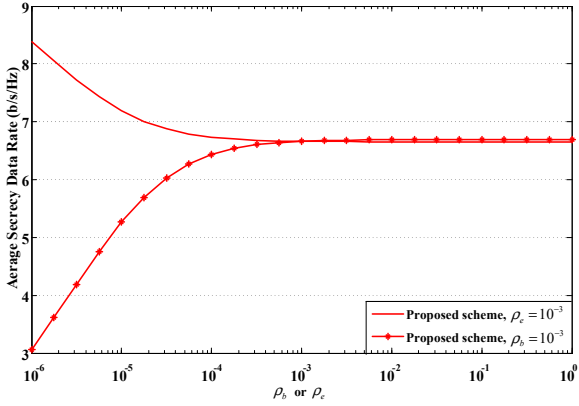} 
\DeclareGraphicsExtensions. \caption{Average achievable secrecy rate versus the self-interference level.
The distance parameter $R=10$.}
\vspace* {-6pt}
\end{figure}

In practice, perfect channel estimates are difficult to obtain.
Since the proposed precoding matrix design highly depends on the
channels, we next examine the secrecy rate performance in the presence of imperfect channel estimates.
{We model imperfect CSI through a Gauss-Markov uncertainty of the form \cite{Nosrat11}
\begin{align}
{\bf G}_{ei}=d_{ei}^{-c/2}\left(\sqrt{1-\alpha^2}\bar{\bf G}_{ei}+\alpha\Delta\bar{\bf G}_{ei}\right), i=a, b, \label{eqImpCSI}
\end{align}
where $0 \le \alpha \le 1$ denotes the channel uncertainty. $\alpha=0$ and $\alpha=1$
correspond to perfect channel knowledge and no CSI knowledge, respectively.}
The entries of $\bar{\bf G}_{ei}$ are $e^{j\theta}$ with $\theta$ be a random phase
uniformly distributed within $[0, 2 \pi)$.
$\Delta\bar{\bf G}_{ei}\sim \mathcal{CN}(\bf{0},\bf{I})$
represents the Gaussian error channel matrices.
$d_{ei}$ denotes the distance from \emph{Alice} or \emph{Bob}.
With the same channel model as in (\ref{eqImpCSI}), we model the channel uncertainty of the channels ${\bf H}_{bi}$, $i=a,b,e$.
We construct the precoding matrix pair $({\bf V}_a, {\bf V}_b)$ with the estimated channels.

In Fig. 9, we plot the achievable secrecy rate with respect to the channel uncertainty in ${\bf H}_{bi}$, $i=a,b,e$,
for the proposed antenna allocation scheme, i.e., $N_b^t=2$, $N_b^r=5$.
It can be observed that the achievable secrecy rate
remains constant for different channel uncertainties of ${\bf H}_{bi}$, $i=a,b,e$.
This should be expected, since the constructed precoding matrix pair consists of two
precoding vector pairs from C3, whose formulas only depend on
the matrices $ {{\bf U}_{e}^0}^H{\bf{G}}_{ea}$ and $ {{\bf U}_{e}^0}^H{\bf{G}}_{eb}$.
Therefore, the channels ${\bf H}_{bi}$, $i=a,b,e$ do not
enter in the construction of the precoding matrix pair.
Indeed, for the equivalent helper-assisted wiretap channel with the antenna allocation given by \emph{Proposition 2}, i.e.,
$\hat N_h$, it can be verified that
there are no candidate precoding vector pairs in C2. Therefore, the achievable secrecy rate of proposed scheme
is independent of the channel uncertainties of ${\bf H}_{bi}$, $i=a,b,e$.
As illustrated in Fig. 2, for a given fixed $N_e^t$ there may be more than one $N_b^t$'s which can achieve
the maximum S.D.o.F..
Intuitively, those schemes achieving the same S.D.o.F. can also achieve the same secrecy rate performance,
which, combined with the fact that the proposed schemes's
achievable secrecy rate remains unchanged even when the channel estimates turns noisy,
indicates that the proposed scheme will outperform the others.
Next, with simulations we show that advantage of the proposed scheme.
Let's take the antenna allocation, i.e., $N_b^t=4$, $N_b^r=3$, as an example.
Substituting $N_b^t=4$, $N_b^r=3$ into Section III. A, one can see that the maximum S.D.o.F. of 2 can also be achieved.
In particular, with $N_b^t=4$, $N_b^r=3$ the system under consideration is equivalent to the helper-assisted wiretap channel of Fig. 1(b),
with the number of antennas being $N_s=4$, $ N_h=4$, $N_d=2$ and $N_{ep}=4$;
for that helper-assisted wiretap channel, the number of candidate precoding vector pairs in
C1, C2 and C3 are respectively 0, 2 and 2.
Following the construction method in Section III. A, we first select the two candidate precoding vector pairs in C2.
Since $N_d=2$, we cannot pick any more precoding vector pairs without violating the constraint that
the total number of signal streams \emph{Bob} can see is no greater than its total number of receive antennas. Concluding,
a total of two precoding vector pairs can be picked from C2, and as such an S.D.o.F. of 2 can be achieved \cite{Lingxiang16,Lingxiang162}.
Based on Fig. 9 one can see that the proposed scheme, i.e., $N_b^t=2$, $N_b^r=5$, and
that with $N_b^t=4$, $N_b^r=3$, provide the same secrecy rate performance when the channel estimates are perfect.
Moreover, when the channel estimates are noisy, i.e., $\alpha >0$, the proposed scheme outperforms the other one, since
the achievable secrecy rate of the proposed scheme remains unchanged while
that of the other scheme drops with the increase of uncertainty in the channels ${\bf H}_{bi}$, $i=a,b,e$.
This is because, unlike the proposed scheme the formulas of the precoding vector pairs of the other one are from C2, and as such they
depend on the channel $ {{\bf U}_{b}^0}^H{\bf{H}}_{bb}$.

On the other hand, in Fig. 9 it can be observed that the achievable secrecy rate drops with the
increase of uncertainty in the channels ${\bf G}_{bi}$, $i=a,b,e$.
This should be expected, since the benefits brought by the proposed scheme
come from the successful alignment of the message and jamming signals at \emph{Eve}.
To achieve that goal, the exact knowledge of the channels ${\bf G}_{ei}$, $i=a,b,e$, is necessary.
As a conclusion, one can see that the uncertainty in the channels ${\bf G}_{ei}$, $i=a,b,e$,
is more dangerous.

\begin{figure}[!t]
\centering
\includegraphics[width=3in]{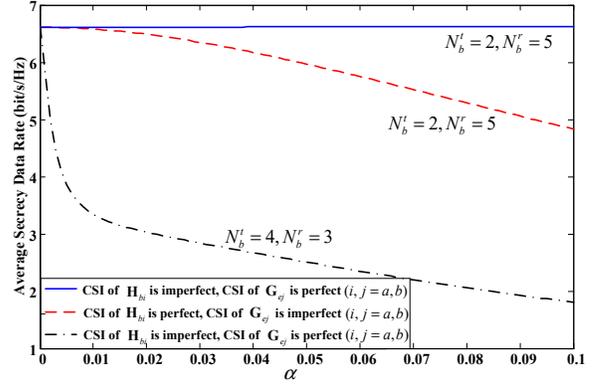} 
\DeclareGraphicsExtensions. \caption{Average achievable secrecy rate versus channel uncertainty. The distance
parameter $R=10$.}
\vspace* {-6pt}
\end{figure}


\section{Conclusion}
We have analytically addressed the S.D.o.F. maximization problem
of a MIMO Gaussian wiretap channel in the presence of an active \emph{Eve}.
Specifically, we have proposed a Full-Duplex \emph{Bob} scheme, where \emph{Bob} divides the antenna set into two
parts, one devoted to receiving and the other to jamming.
Based on the proposed scheme, we have derived the optimal number of transmit/receive antennas at \emph{Bob}, and determined
the maximum S.D.o.F., as a function of the number of antennas at each terminal.
We have further found the worst-case achievable S.D.o.F. for the adverse scenario in which \emph{Eve}
knows the transmit strategies and tries to minimize the S.D.o.F. by adjusting its number of transmit/receive antennas.
Our analysis has revealed that a positive S.D.o.F. can be guaranteed as long as it holds that $N_b>N_e$.
We have also constructed a precoding matrix pair which achieves the maximum S.D.o.F..
Numerical results have revealed the advantages of the proposed
secrecy transmission scheme over the existing half-duplex scheme, 
and have validated the robustness of the proposed scheme under realistic scenarios.
%

\appendices

\section{Proof of \emph{Proposition 1}} \label{appA}
Given an arbitrary point $({\bf V}_a,{\bf V}_b)$, with
${\rm{tr}} \{{\bf Q}_a\}=P$ and $ {\rm{tr}} \{{\bf Q}_b\}=P$.
We can respectively rewrite ${\bf Q}_a$ and ${\bf Q}_b$ as
${\bf Q}_a=  P \bar{\bf Q}_a$ and ${\bf Q}_b=  P \bar{\bf Q}_b $,
with ${\rm{tr}} \{\bar{\bf Q}_a\}={\rm{tr}}\{{\bar{\bf Q}_b}\} = 1$.
Correspondingly, (\ref{eq3a}) can be rewritten as
\begin{align}
R_b=I_b^2-I_b^1,\label{eqb1}
\end{align}
where
\begin{subequations}
\begin{align}
& I_b^1 \triangleq {\rm {log}}|{\bf I}+P{\bf M}{\bf{H}}_{bb}\bar{\bf Q}_b{\bf{H}}_{bb}^H|, \label{eqb2a} \\
& I_b^2 \triangleq {\rm {log}}|{\bf I}+P{\bf M}({\bf{H}}_{bb}\bar{\bf Q}_b{\bf{H}}_{bb}^H
+{\bf{H}}_{ba}\bar{\bf Q}_a{\bf{H}}_{ba}^H)|, \label{eqb2b}
\end{align}
\end{subequations}
with ${\bf M} \triangleq ({\bf I}+\dfrac{P}{N_e^t}{\bf{H}}_{be}{\bf{H}}_{be}^H)^{-1}$.

Let ${\bf{H}}_{be}{\bf{H}}_{be}^H=\left[{\bf U}_{b}^1\ {\bf U}_{b}^0 \right]
\left[ {\begin{array}{*{20}{c}}
{{{\bf{\Sigma}}_{b}}}&{\bf{0}}\\
{\bf{0}}&{{{\bf{0}}}}
\end{array}} \right]
\left[ {\begin{array}{*{20}{c}}
{{{\bf{U}}_{b}^{1H}}}\\
{{{\bf{U}}_{b}^{0H}}}
\end{array}} \right]$ be the singular value decomposition (SVD), and then
\begin{align}
{\bf M} ={\bf U}_{b}^1({\bf I}+\dfrac{P}{N_e^t}{\bf{\Sigma}}_{b})^{-1}{\bf U}_{b}^{1H}+{\bf U}_{b}^0{\bf U}_{b}^{0H}. \label{eqb3}
\end{align}
Substituting (\ref{eqb3}) into (\ref{eqb2a}) and (\ref{eqb2b}), respectively, we obtain
{\small
\begin{subequations}
\begin{align}
&\mathop{\lim }\limits_{ P \to \infty }\dfrac{I_b^1}{{\rm {log}}(P)} = \mathop{\lim }\limits_{ P \to \infty }\dfrac{{\rm {log}}|{\bf I}+P\bar{\bf{H}}_{bb}\bar{\bf Q}_b\bar{\bf{H}}_{bb}^H|}{{\rm {log}}(P)}, \label{eqb4}\\
&\mathop{\lim }\limits_{ P \to \infty }\dfrac{I_b^2}{{\rm {log}}(P)} = \mathop{\lim }\limits_{ P \to \infty }\dfrac{{\rm {log}}|{\bf I}+P(\bar{\bf{H}}_{bb}\bar{\bf Q}_b\bar{\bf{H}}_{bb}^H
+\bar{\bf{H}}_{ba}\bar{\bf Q}_a\bar{\bf{H}}_{ba}^H)|}{{\rm {log}}(P)}, \label{eqb5}
\end{align}
\end{subequations}}
where $\bar{\bf{H}}_{bb} \triangleq {\bf U}_{b}^{0H}{\bf{H}}_{bb}$, $\bar{\bf{H}}_{ba} \triangleq  {\bf U}_{b}^{0H}{\bf{H}}_{ba}$.

Combining (\ref{eqb1}), (\ref{eqb4}) and (\ref{eqb5}), we arrive at that
{\small\begin{align}
\mathop{\lim }\limits_{ P \to \infty }\dfrac{R_b}{{\rm {log}}(P)} = \mathop{\lim }\limits_{ P \to \infty }\dfrac{{\rm {log}}|{\bf I}+({\bf I}+P\bar{\bf{H}}_{bb}\bar{\bf Q}_b\bar{\bf{H}}_{bb}^H)^{-1}P\bar{\bf{H}}_{ba}\bar{\bf Q}_a\bar{\bf{H}}_{ba}^H|}{{\rm {log}}(P)}. \label{eqb6}
\end{align}}

Letting ${\bf{G}}_{ee}{\bf{G}}_{ee}^H=\left[{\bf U}_{e}^1\ {\bf U}_{e}^0 \right]
\left[ {\begin{array}{*{20}{c}}
{{{\bf{\Sigma}}_{e}}}&{\bf{0}}\\
{\bf{0}}&{{{\bf{0}}}}
\end{array}} \right]
\left[ {\begin{array}{*{20}{c}}
{{{\bf{U}}_{e}^{1H}}}\\
{{{\bf{U}}_{e}^{0H}}}
\end{array}} \right]$ be the SVD, and applying the same derivations from (\ref{eqb1}) to (\ref{eqb6}), we obtain that
{\small\begin{align}
\mathop{\lim }\limits_{ P \to \infty }\dfrac{R_e}{{\rm {log}}(P)} = \mathop{\lim }\limits_{ P \to \infty }\dfrac{{\rm {log}}|{\bf I}+({\bf I}+P\bar{\bf{G}}_{eb}\bar{\bf Q}_b\bar{\bf{G}}_{eb}^H)^{-1}P\bar{\bf{G}}_{ea}\bar{\bf Q}_a\bar{\bf{G}}_{ea}^H|}{{\rm {log}}(P)}, \label{eqb7}
\end{align}}
where $\bar{\bf{G}}_{ea}\triangleq {\bf U}_{e}^{0H}{\bf{G}}_{ea}$ and $\bar{\bf{G}}_{eb}\triangleq  {\bf U}_{e}^{0H}{\bf{G}}_{eb}$.

Combining (\ref{eqb6}) and (\ref{eqb7}), one can see that
the achievable S.D.o.F. is equal to that of a helper-assisted wiretap channel,
with the channels to \emph{Bob} as $ {\bf U}_{b}^{0H}{\bf{H}}_{ba}$
and $ {\bf U}_{b}^{0H}{\bf{H}}_{bb}$, and the channels to \emph{Eve}
as $ {\bf U}_{e}^{0H}{\bf{G}}_{ea}$ and $ {\bf U}_{e}^{0H}{\bf{G}}_{eb}$, respectively.
Since $N_e^t<N_b^r$ and $N_e^t<N_e^r$, and all the channel matrices are assumed to be
full rank, this helper-assisted wiretap channel has effective number of antennas
$N_s=N_a$, $ N_h=N_b^t$, $N_d=N_b^r-N_e^t$ and
$N_{ep}=N_e^r-N_e^t$.
This completes the proof.

\section{Proof of \emph{Proposition 2}} \label{appB}
It can be verified that, for the case of $N_{\rm sum} \le N_s-N_{ep}$, the maximum achievable S.D.o.F. equals $N_{\rm sum}$,
which is consistent with (\ref{eq5}); for the case of $N_{\rm sum} \le N_{ep}-N_s$, the maximum achievable S.D.o.F. equals 0,
which is also consistent with (\ref{eq5}).
Thus, in the sequel, we only need to focus on the case of $N_{\rm sum} > |N_s-N_{ep}|$, in which
\begin{align}
d_{s,p} = \min\{\delta, N_{\rm sum}, N_s \}, \label{eqa0}
\end{align}
where $\delta=\lfloor \frac{N_{\rm sum}-|N_s-N_{ep}|}{3}\rfloor+(N_s-N_{ep})^+$.

According to \emph{Theorem 1} of \cite{Lingxiang16} or equation (36) of \cite{Lingxiang162}, the maximum achievable S.D.o.F. for such a
helper-assisted wiretap channel is
\begin{align}
g(N_h)=\min \{d_{c=1}(N_h)+d_{c=2}^\star(N_h),N_d,N_s\}, \label{eqa1}
\end{align}
where
\begin{subequations}
\begin{align}
&d_{c=1}(N_h)\triangleq(N_s-N_{ep})^+ +s_1(N_h), \label{eqa2a}\\
&d_{c=2}^\star(N_h)\triangleq \min \{s_2(N_h), \lfloor (N_d-d_{c=1}(N_h))^+ / 2 \rfloor\}, \label{eqa2b}
\end{align}
\end{subequations}
with
\begin{align}
&s_1(N_h) \triangleq (\min\{N_s, N_{ep}\}+\min\{(N_h-N_d)^+,N_{ep}\}-N_{ep})^+, \nonumber \\
& s_2(N_h) \triangleq (\min\{N_s, N_{ep}\}+\min\{N_h,N_{ep}\}-N_{ep})^+-s_1(N_h). \nonumber
\end{align}

In the following, we will consider two distinct cases, i.e., the case of $N_s \le N_{ep}$
and the case of $N_s > N_{ep}$. For each case we first give a specific value of $N_h$, denoted by $\hat N_h$,
which satisfies $g(\hat N_h)=d_{s,p}$. We then prove that for any $N_h \ne \hat N_h$,
it holds that $g( N_h)\le d_{s,p}$.
In this way, we complete the proof of \emph{Proposition 2}.

\subsection{For the case of $N_s \le N_{ep}$}
It holds that $\delta= \lfloor \dfrac{N_{\rm sum}-|N_s-N_{ep}|}{3}\rfloor$.

Let $\hat N_d=2\lfloor \dfrac{N_{\rm sum}-|N_s-N_{ep}|}{3}\rfloor+i$, and
\begin{align}
\hat N_h=\lfloor \dfrac{N_{\rm sum}-|N_s-N_{ep}|}{3}\rfloor+(N_{ep}-N_s), \label{eqa3}
\end{align}
where $i \triangleq N_{\rm sum}-3 \bar {N}_d$. By definition $i \in \{0,1,2\}$.
\\\\
{\emph{A. 1 When $\delta \ge N_s$} }

In this subcase, it can be verified that $N_{\rm sum} \ge N_s$. Thus, (\ref{eqa0}) becomes 
\begin{align}
d_{s,p}=N_s. \label{eqa54}
\end{align}
On the other hand, since $\hat N_h \ge N_{ep}$, (\ref{eqa2a}) becomes
\begin{align}
d_{c=1}(\hat N_h)=N_s. \label{eqa6}
\end{align}
Substituting (\ref{eqa6}) into (\ref{eqa1}) and combined with the fact
that $\min\{ \hat N_d, N_s\}=N_s$, we arrive at $g(\hat N_h)=N_s$. Besides,
by (\ref{eqa1}) the inequality $g( N_h) \le N_s$ always holds true.
Therefore, the maximum value of $g( N_h)$ over $N_h$ is
\begin{align}
g(\hat N_h)=N_s \mathop = \limits^{(a)} d_{s,p}, \nonumber
\end{align}
where (a) comes from the equality in (\ref{eqa54}).
\\\\
{\emph{A. 2 When $\delta < N_s$} }

In this subcase, it can be verified that $\delta <N_{\rm sum} $. Thus, (\ref{eqa0}) becomes 
\begin{align}
d_{s,p}=\delta . \label{eqa55}
\end{align}

On the other hand, since $N_s \le N_{ep}$ and $\hat N_h- \hat N_d \le N_{ep}-N_s$, (\ref{eqa2a})
and (\ref{eqa2b}) respectively becomes
\begin{align}
{d_{c=1}(\hat N_h)=0, \label{eqa8}} \\
{d_{c=2}^\star(\hat N_h)=\delta . \label{eqa8add}}
\end{align}
Substituting (\ref{eqa8}) and (\ref{eqa8add}) into (\ref{eqa1}) and combined with the fact that
$\min\{\delta, \hat N_d, N_s\}=\delta$, we obtain
\begin{align}
g(\hat N_h)=\delta\mathop = \limits^{(a)} d_{s,p}, \label{eqaghat}
\end{align}
where (a) comes from the equality in (\ref{eqa55}).

\emph{Next, we will prove that for any other $N_h \ne \hat N_h$ it holds that
$g( N_h)\le d_{s,p}$, thus completing the proof that
the maximum value of $g( N_h)$ over $N_h$ is $g(\hat N_h)=d_{s,p}$.
To achieve that goal, we introduce $\bar N_d=\lfloor \dfrac{N_{\rm sum}-|N_s-N_{ep}|}{3}\rfloor$, and
\begin{align}
\bar N_h=2\lfloor \dfrac{N_{\rm sum}-|N_s-N_{ep}|}{3}\rfloor+i+(N_{ep}-N_s). \nonumber
\end{align}
With similar derivations from (\ref{eqa54}) to (\ref{eqaghat}) it can be verified that
$g(\bar N_h)=d_{s,p}=g(\hat N_h)$. In the remaining text of this subsection, we will
show that for any other $N_h \ne \bar N_h$ it holds that
$g( N_h)\le d_{s,p}$.}

i) For any $N_h > \bar N_h$, it holds that $N_d < \bar N_d$.
In addition, by (\ref{eqa1}) it holds that $g(N_h) \le N_d$. Therefore,
\begin{align}
g(N_h)< \bar N_d= d_{s,p}. \nonumber
\end{align}

ii) For any $N_h < \bar N_h$, say $N_h = \bar N_h-k$ with $k \ge 1$, i.e.,
\begin{align}
&N_h =2 \bar N_d+i+(N_{ep}-N_s) -k, \nonumber\\ 
&N_d = \bar N_d +k. \nonumber 
\end{align}
Thus, $N_h -N_d=\bar N_d+(N_{ep}-N_s)+i-2k < N_{ep}$,
which, together with (\ref{eqa2a}), gives
\begin{align}
d_{c=1}(N_h)=(\bar N_d +i-2k)^+.  \label{eqa56}
\end{align}

\begin{enumerate}
\item For the case of $2k \le \bar N_d+i$, (\ref{eqa56}) becomes
$d_{c=1}(N_h)=\bar N_d+i-2k$, which, combined with (\ref{eqa2b}),
gives $d_{c=2}^\star (N_h)\le \lfloor \dfrac{3k-i}{2}\rfloor$. Therefore,
\begin{align}
&g( N_h) \le d_{c=1}(N_h)+d_{c=2}^\star(N_h) \nonumber \\
&\le \bar N_d +i-2k +\lfloor \dfrac{3k-i}{2}\rfloor  \nonumber \\
& \mathop  \le \limits^{(a)} \bar N_d \mathop  = \limits^{(b)} d_{s,p}.  \nonumber  
\end{align}
Here, since $i \le 2$ and $k \ge 1$, it holds true that $i-2k +\lfloor \dfrac{3k-i}{2}\rfloor \le 0$,
and as a result, (a) holds true; (b) comes from the equality in (\ref{eqa55}).
\item For the case of $\bar N_d+i < 2k \le 2(\bar N_d+1)$, (\ref{eqa56}) becomes
$d_{c=1}(N_h)=0$. In addition, by (\ref{eqa2b}), it holds that $d_{c=2}^\star (N_h)\le \lfloor N_d/2 \rfloor$,
which, combined with $N_d=\bar N_d +k \le 2\bar N_d +1$, indicates that $d_{c=2}^\star (N_h)\le \bar N_d$. Therefore,
\begin{align}
&g( N_h) \le d_{c=2}^\star(N_h) \le \bar N_d = d_{s,p}.  \nonumber  
\end{align}
\item For the case of $k \ge \bar N_d+2$, (\ref{eqa56}) becomes
$d_{c=1}(N_h)=0$. Therefore,
\begin{align}
&g( N_h) \le d_{c=2}^\star(N_h) \le s_2(N_h) \nonumber \\
&=\min\{N_s, N_s+N_h-N_{ep}\}  \nonumber \\
&\le 2 \bar N_d+i-k \le \bar N_d+i-2 \nonumber \\
& \le \bar N_d = d_{s,p}.  \nonumber
\end{align}
\end{enumerate}

Based on the above two subcases, i.e., \emph{A. 1} and \emph{A. 2}, one can see that for the case of $N_s \le N_{ep}$
the maximum value of $g(N_h)$ over $N_h$ is $g(\hat N_h)=g(\bar N_h)=d_{s,p}$.
It is worth noting that, although both $\hat N_h$ and $\bar N_h$ can achieve the maximum S.D.o.F.,
as it can be observed in Section V, for the helper-assisted
wiretap channel with the antenna allocation given by $\hat N_h$, the formulas of the candidate precoding vector pairs
are independent of the channel matrices to \emph{Bob}. Therefore, when the channel estimates are noisy the proposed scheme with $N_h=\hat N_h$ outperforms that scheme with $N_h=\bar N_h$ in terms of the achievable secrecy rate.

\subsection{For the case of $N_s > N_{ep}$}
It holds that $\delta= \lfloor \dfrac{N_{\rm sum}-|N_s-N_{ep}|}{3}\rfloor+(N_s-N_{ep})$.

Let $\hat N_d=2\lfloor \dfrac{N_{\rm sum}-|N_s-N_{ep}|}{3}\rfloor+j+(N_s-N_{ep})$, and
\begin{align}
\hat N_h=\lfloor \dfrac{N_{\rm sum}-|N_s-N_{ep}|}{3}\rfloor, \label{eqa4}
\end{align}
where $j \triangleq N_{\rm sum}-3 \hat {N}_b^t$. By definition, $j \in \{0,1,2\}$.
Besides, since $\hat N_h < \hat N_d$, it holds that
\begin{align}
d_{c=1}(\hat N_h)=N_s-N_{ep}. \label{eqa20}
\end{align}\\
{\emph {B. 1 When $\hat N_h \ge N_{ep}$}}

In this subcase, it can be verified that $N_s \le \delta$ and $N_s \le N_{\rm sum}$. Thus, (\ref{eqa0}) becomes
\begin{align}
d_{s,p}=N_s.  \label{eqa51}
\end{align}
On the other hand, since $\hat N_h \ge N_{ep}$, it holds that
\begin{align}
s_2(\hat N_h)=N_{ep}. \label{eqa21}
\end{align}
Substituting (\ref{eqa20}) and (\ref{eqa21}) into (\ref{eqa1}) yields
$g(\hat N_h)= N_s$. In addition, by (\ref{eqa1}) the inequality $g( N_h) \le N_s$ always holds true.
Therefore, the maximum value of $g( N_h)$ over $N_h$ is
\begin{align}
g(\hat N_h)=N_s \mathop = \limits^{(a)} d_{s,p},  \nonumber
\end{align}
where (a) comes from the equality in (\ref{eqa51}).
\\\\
{\emph {B. 2 When $\hat N_h < N_{ep}$}}

In this subcase, it can be verified that $\delta \le N_s$ and $\delta \le N_{\rm sum}$. Thus, (\ref{eqa0}) becomes
\begin{align}
d_{s,p}=\delta=\hat N_h+(N_s-N_{ep}).  \label{eqa14}
\end{align}
On the other hand, $\hat N_h < N_{ep}$ combined with (\ref{eqa2b}), gives
\begin{align}
d_{c=2}^\star(\hat N_h)=\hat N_h. \label{eqa22}
\end{align}
Substituting (\ref{eqa20}) and (\ref{eqa22}) into (\ref{eqa1}) yields
\begin{align}
g(\hat N_h)= \hat N_h +(N_s-N_{ep}) \mathop = \limits^{(a)} d_{s,p}, \nonumber
\end{align}
where (a) comes from the equality in (\ref{eqa14}).

\emph{In the sequel, we will prove that for any other $N_h \ne \hat N_h$ it holds that
$g( N_h)\le d_{s,p}$, thus completing the proof of
that the maximum value of $g( N_h)$ over $N_h$ is $g(\hat N_h)=d_{s,p}$.}

i) For any $N_h < \hat N_h$, it holds that $d_{c=1}( N_h)=N_s- N_{ep}$ and
$d_{c=2}^\star(  N_h)=N_h < \hat N_h $.
Therefore,
\begin{align}
g(N_h)  \le  d_{c=1}( N_h) +d_{c=2}^\star(N_h) \le d_{s,p}. \label{eqa52}
\end{align}

ii) For any $N_h$ satisfying $N_h  > \hat N_h$ and $ N_h \le N_d$,
it holds that $d_{c=1}( N_h)=N_s- N_{ep}$. Based on (\ref{eqa2b}) it holds that
\begin{align}
&d_{c=2}^\star( N_h)  \le \lfloor (N_d-d_{c=1}( N_h))^+ / 2 \rfloor \nonumber \\
&\le \lfloor (\hat N_d-1-d_{c=1}( N_h))^+ / 2 \rfloor \nonumber \\
&=\hat N_h+\lfloor (j-1)/2 \rfloor, \nonumber
\end{align}
which, combined with the fact $j \le 2$, indicates that, $d_{c=2}^\star( N_h) \le \hat N_h$.
Therefore, the inequalities in (\ref{eqa52}) also hold true.

iii) For any $N_h$ satisfying $ N_h> \hat N_h$ and $ N_h> N_d$, we will
first give a specific value of $N_h$, denoted by $\bar N_h$, which satisfies $g(\bar N_h) \le d_{s,p}$.
We then prove that for any other $N_h \ne \bar N_h$ 
it holds that $g(N_h) \le g(\bar N_h) $. In this way, we finish the proof that $g(N_h) \le d_{s,p} $.

Note that since $N_{\rm sum} =N_h+ N_d > 2N_d$, for the case of $N_{\rm sum} \le 2(N_s-N_{ep})$ it holds that $N_d <(N_s-N_{ep})$, which, combined with
$g(N_h)\le N_d$, indicates that $g(N_h) < N_s-N_{ep} < d_{s,p}$.
Therefore, in the following arguments we only need to focus on the case of $N_{\rm sum} > 2(N_s-N_{ep})$.

Let $\bar N_d=\lfloor \dfrac{N_{\rm sum}-2|N_s-N_{ep}|}{3}\rfloor+(N_s-N_{ep})$, and
\begin{align}
\bar N_h=2\lfloor \dfrac{N_{\rm sum}-2|N_s-N_{ep}|}{3}\rfloor+\tau+(N_s-N_{ep}), \label{eqa18}
\end{align}
where $\tau \triangleq N_{\rm sum}-3\lfloor \dfrac{N_{\rm sum}-2(N_s-N_{ep})}{3}\rfloor-2(N_s-N_{ep})$. By definition, it holds that $\tau \in \{0, 1, 2\}$.

Substituting (\ref{eqa18}) into (\ref{eqa2a}), we arrive at
\begin{align}
\small d_{c=1}(\bar N_h)=N_s-N_{ep}+ \min\{\lfloor \dfrac{N_{\rm sum}-2|N_s-N_{ep}|}{3}\rfloor+\tau, N_{ep}\}, \nonumber 
\end{align}
which, combined with (\ref{eqa1}), gives
\begin{align}
\small g(\bar N_h)=\bar N_d=\lfloor \dfrac{N_{\rm sum}-2|N_s-N_{ep}|}{3}\rfloor+(N_s-N_{ep}) . \label{eqa15}
\end{align}
On comparing (\ref{eqa14}) and (\ref{eqa15}), one can see that
\begin{align}
g(\bar N_h) \le  d_{s,p}. \label{eqa24}
\end{align}

On the other hand, for any $N_h < \bar N_h$, say $N_h = \bar N_h-k$, $k \ge 1$, it holds that
$N_d = \bar N_d+k$. Thus, $N_h-N_d = \bar N_h- \bar N_d -2k <N_{ep}$,
which together with (\ref{eqa2a}), indicates that
\begin{align}
d_{c=1}(N_h)&=(N_s-N_{ep})+\lfloor \dfrac{N_{\rm sum}-2|N_s-N_{ep}|}{3}\rfloor+\tau-2k \nonumber\\
&\mathop = \limits^{(a)} g(\bar N_h) +\tau-2k, \nonumber
\end{align}
where (a) is due to (\ref{eqa15}).
In addition, by (\ref{eqa2b}) we have
\begin{align}
d_{c=2}^\star (N_h)   \le \lfloor (N_d-d_{c=1}(N_h))^+ / 2 \rfloor
\le \lfloor \dfrac{3k- \tau}{2}\rfloor. \nonumber
\end{align}
Since $\tau \le 2$ and $k \ge 1$,
it holds that $\tau-2k +\lfloor \dfrac{3k-\tau}{2}\rfloor \le 0$.
Therefore,
\begin{align}
g( N_h)\le d_{c=1}(N_h)+d_{c=2}^\star(N_h) \le g(\bar N_h). \label{eqa19}
\end{align}
Moreover, for any $N_h >\bar N_h$, it holds that
\begin{align}
g(N_h) \le  N_d <\bar N_d=g(\bar N_h). \label{eqa23}
\end{align}

Combining (\ref{eqa19}) with (\ref{eqa23}), one can see that for any other $N_h \ne \bar N_h$
satisfying $ N_h> \hat N_h$ and $ N_h> N_d$,
it holds that $g(N_h) \le g(\bar N_h) $, which, combined with (\ref{eqa24}),
indicates that $g(N_h) \le d_{s,p}$.
This completes the proof.

\section{Proof of \emph{Theorem 1}} \label{appC}
In the sequel, we will consider three distinct cases.
\begin{enumerate}
\item For the case of $N_e^t \ge N_e^r$, \emph{Eve} cannot see any interference-free subspaces,
and so the maximum achievable S.D.o.F. is equal to $\mathop{\lim }\limits_{ P \to \infty }
\dfrac{R_b}{{\rm log} \ P}$, whose maximum value over the input covariance matrices
is $\min\{(N_b-N_e^t)^+, N_a\}$. In that case, there is no need for \emph{Bob} to transmit
jamming signals to reduce the interference-free subspace that \emph{Eve} can see, and so
we set ${N_b^t}^\star=0$.
\item For the case of $N_e^t < N_e^r$ and $N_e^t \ge N_b$ the maximum achievable S.D.o.F. is
zero since \emph{Bob} already cannot see any interference-free subspaces. In that case, the achievable S.D.o.F.
will be zero even if \emph{Bob} transmits jamming signals, and so we set ${N_b^t}^\star=0$.
\item For the case of $N_e^t < N_e^r$ and $N_e^t < N_b$, no positive S.D.o.F. can be achieved if $N_b^r \le N_e^t$, and thus, in order
to maximize the achievable S.D.o.F., \emph{Bob} should choose a value
of $N_b^r$ such that $N_b^r >N_e^t $.
In that case, and by \emph{Proposition 1}, one can see that the maximum achievable S.D.o.F.
is equal to that of a helper-assisted wiretap channel with number of antennas $N_s=N_a$, $ N_h=N_b^t$, $N_d=N_b^r-N_e^t$, $N_{\rm sum}=N_b-N_e^t$ and $N_{ep}=N_e^r-N_e^t$. Substituting these values
into \emph{Proposition 2}, we arrive at we arrive at the expression of ${N_b^t}^\star$, i.e., ${\hat N_h}$,
and also the maximum achievable S.D.o.F., i.e., $\min\{\eta, N_b-N_e^t, N_a\}$.
\end{enumerate}

Concluding the above three cases, one can obtain the expressions of $d_{s,a}(N_e^t)$ and ${N_b^t}^\star$, as given in
(\ref{eq6}) and (\ref{eqNbt}), respectively.
This completes the proof.

\section{Proof of \emph{Theorem 2}} \label{appD}
We should note that for the case of $N_e \ge N_b$, the best choice for \emph{Eve} is to
allocate $N_b$ antennas to transmit; for that case no positive
S.D.o.F. can be achieved.
In what follows, we only need to study the nontrivial case of $N_e < N_b$.

From (\ref{eq6}), one can see that the achievable S.D.o.F. for the case of
$N_e^r < N_e^t$ is no greater than that of the other case. Therefore, to make sure that the achievable S.D.o.F.
is minimized, \emph{Eve} would always choose the value of $N_e^t$ such that $N_e^t < N_e^r$;
for that case
\begin{align}
d_{s,a}(N_e^t) =\min\{\eta, N_b-N_e^t, N_a\}, \label{eqc1}
\end{align}
with $\eta \triangleq \lfloor \frac{(N_b-N_e^t-|N_a-N_e^r+N_e^t|)^+}{3}\rfloor+(N_a-N_e^r+N_e^t)^+$.

Looking into the expression of $\eta$, we get two thresholds of $N_e^t$, i.e., $\dfrac{N_e-N_a}{2}$
and $ \dfrac{N_b+N_e-N_a}{3}$. Since $N_e < N_b$, it holds that $\dfrac{N_e-N_a}{2}< \dfrac{N_b+N_e-N_a}{3}$.
In order to simply the expression of $d_{s,a}(N_e^t)$,
in the following we will consider three distinct cases, which are obtained by those two thresholds.
\begin{enumerate}
\item For the case of $N_e^t \le \dfrac{N_e-N_a}{2}$, it holds that 
\begin{align}
\eta &=\lfloor \dfrac{N_b+N_a-N_e+N_e^t}{3}\rfloor \le \lfloor \dfrac{N_b-N_e^t+N_a+N_e}{3}\rfloor \nonumber\\
& \mathop  \le \limits^{(a)} N_b-N_e^t, \nonumber
\end{align}
where (a) comes from the fact that
\begin{align}
N_a+N_e \le 2(N_e-N_e^t)<2(N_b-N_e^t). \nonumber
\end{align}
Thus, (\ref{eqc1}) becomes
\begin{align}
m_1(N_e^t)= \min\{\lfloor \dfrac{N_b+N_a-N_e+N_e^t}{3}\rfloor, N_a \}. \nonumber
\end{align}
\item For the case of $ \dfrac{N_e-N_a}{2}< N_e^t < \dfrac{N_b+N_e-N_a}{3}$, it holds that
\begin{align}
\eta &=\lfloor \dfrac{N_b-N_a+N_e}{3}\rfloor+N_a-N_e+N_e^t. \nonumber
\end{align}
In addition, due to $N_e^t < \dfrac{N_b+N_e-N_a}{3}$ it holds that
\begin{align}
&2N_e^t \le 2\lfloor \dfrac{N_b+N_e-N_a}{3}\rfloor  \nonumber\\
&\Rightarrow 2N_e^t <  N_b+N_e-N_a-\lfloor \dfrac{N_b+N_e-N_a}{3}\rfloor  \nonumber \\
&\Rightarrow  \lfloor \dfrac{N_b+N_e-N_a}{3}\rfloor+N_a-N_e+N_e^t < N_b-N_e^t. \nonumber
\end{align}
Thus, (\ref{eqc1}) becomes
{\small{\begin{align}
m_2(N_e^t)=\min\{\lfloor \frac{N_b+N_e-N_a}{3}\rfloor+N_a-N_e+N_e^t, N_a \}. \nonumber
\end{align}}}
\item For the case of $N_e^t \ge \dfrac{N_b+N_e-N_a}{3}$, it holds that
\begin{align}
\eta &=N_a-N_e+2N_e^t. \nonumber
\end{align}
Besides, it holds that $N_b-N_e^t \le N_a-N_e+2N_e^t$, which, combined with $2N_e^t < N_e$, indicates that $N_b-N_e^t < N_a$.
Thus, (\ref{eqc1}) becomes
\begin{align}
m_3(N_e^t)= N_b-N_e^t. \nonumber
\end{align}
\end{enumerate}
Concluding the above three cases, one can see that
\begin{align}
d_{s,a}^{\rm{wc}} &= \mathop{\min }\limits_{ 0 \le N_e^t \le N_e } \min \{m_1(N_e^t), m_2(N_e^t), m_3(N_e^t)\}. \label{eqc8}
\end{align}

In the sequel, we will consider three distinct cases, according to whether $m_i(N_e^t)$, $i=1,2,3$, is feasible.
For example, for the case of $N_e<N_a$, $m_1(N_e^t)$ is infeasible, since by definition it ranges
$N_e^t \le \dfrac{N_e-N_a}{2} <0$ which is unavailable.

\subsection{When $\max\{\dfrac{N_b-N_a}{2}, N_a\} \le N_e < N_b$}
It holds that
$\dfrac{N_e-N_a}{2} \ge 0$ and $\dfrac{N_b+N_e-N_a}{3} \le N_e$, which indicates that
both $m_1(N_e^t)$ and $m_3(N_e^t)$ are feasible. Moreover,
\begin{align}
&\mathop{\min }\limits_{N_e^t \le \frac{N_e-N_a}{2}} {m_1(N_e^t)}=m_1(0)=\min\{\lfloor \frac{N_b+N_a-N_e}{3}\rfloor, N_a \}, \nonumber\\
&\mathop{\min }\limits_{N_e^t \ge \frac{N_b+N_e-N_a}{3}} {m_3(N_e^t)}=m_3(N_e)=N_b-N_e. \nonumber
\end{align}

As to $m_2(N_e^t)$, it is feasible only for the case of
$\lfloor\dfrac{N_e-N_a}{2}\rfloor+1 < \dfrac{N_b+N_e-N_a}{3}$, in which
\begin{align}
&\mathop{\min }\limits_{\frac{N_e-N_a}{2} \le N_e^t \le \frac{N_b+N_e-N_a}{3}} m_2(N_e^t)
=m_2(\frac{N_e-N_a-\xi}{2}+1)\nonumber\\
&=\min\{\lfloor \frac{N_b+N_e-N_a}{3}\rfloor+\frac{N_a-N_e-\xi}{2}+1, N_a \}.\nonumber
\end{align}
Here, $\xi =1$ if $N_e-N_a$ is odd and otherwise $\xi =0$.

Since $N_a \le N_e<N_b$, it holds that
\begin{align}
\lfloor \frac{N_b-N_e+N_a}{3}\rfloor& \le \lfloor \frac{N_b+N_e-N_a}{3}\rfloor - \lfloor \frac{2(N_e-N_a)}{3}\rfloor. \nonumber
\end{align}
In addition, it can be verified that $\frac{N_e-N_a+\xi}{2}-1 \le \lfloor \frac{2(N_e-N_a)}{3}\rfloor$.
Therefore, we have $m_1(0) \le m_2(\frac{N_e-N_a-\xi}{2}+1)$.

Combining (\ref{eqc8}) with the above discussions, one can see that for the case
of $\max\{\dfrac{N_b-N_a}{2}, N_a\} \le N_e < N_b$,
\begin{align}
d_{s,a}^{\rm{wc}}&=\min \{m_1(0), m_3(N_e)\} \nonumber \\
&=\min\{\lfloor \frac{N_b+N_a-N_e}{3}\rfloor, N_b-N_e, N_a \}. \nonumber
\end{align}

\subsection{When $\dfrac{N_b-N_a}{2} \le N_e < \min\{ N_b, N_a\}$}
It holds that
$\dfrac{N_e-N_a}{2} < 0$ and $\dfrac{N_b+N_e-N_a}{3} \le  N_e$, which indicates that
$m_3(N_e^t)$ is feasible and $m_1(N_e^t)$ is infeasible. Moreover,
\begin{align}
&\mathop{\min }\limits_{N_e^t \ge \frac{N_b+N_e-N_a}{3}} {m_3(N_e^t)}=m_3(N_e)=N_b-N_e. \nonumber
\end{align}

$m_2(N_e^t)$ is feasible only for the case of $N_b-N_a+N_e > 0$,
in which case it holds that
\begin{align}
&\mathop{\min }\limits_{N_e^t \ge \frac{N_b+N_e-N_a}{3}} {m_2(N_e^t)}=m_2(0)\nonumber \\
&=\min\{\lfloor \frac{N_b+N_e-N_a}{3}\rfloor+N_a-N_e, N_b-N_e \}. \nonumber
\end{align}

Combining (\ref{eqc8}) with the above discussions, we have the following conclusions:

\begin{enumerate}
\item For the case of $\dfrac{N_b-N_a}{2} \le N_e < \min\{ N_b, N_a\}$ and $N_b-N_a+N_e > 0$,
it holds that
\begin{align}
&d_{s,a}^{\rm{wc}}=\min\{\lfloor \frac{N_b+N_e-N_a}{3}\rfloor+N_a-N_e, N_b-N_e \}. \nonumber
\end{align}
\item For the case of $\dfrac{N_b-N_a}{2} \le N_e < \min\{ N_b, N_a\}$ and $N_b-N_a+N_e \le 0$,
it holds that
\begin{align}
d_{s,a}^{\rm{wc}}=N_b-N_e. \nonumber
\end{align}
\end{enumerate}

\subsection{When $N_e < \min\{\dfrac{N_b-N_a}{2}, N_b \}$}
It holds that $\dfrac{N_b+N_e-N_a}{3} >  N_e$,
which indicates that $m_3(N_e^t)$ is infeasible, and $m_2(N_e^t)$ is feasible.

$m_1(N_e^t)$ is feasible only for the case of $N_e \ge N_a$, in which case it holds that
\begin{align}
d_{s,a}^{\rm{wc}}&= \min\{m_1(0), m_2(\frac{N_e-N_a-\xi}{2}+1) \} \nonumber \\
&\mathop  = \limits^{(a)}  m_1(0)\mathop  = \limits^{(b)} N_a.  \nonumber
\end{align}
where (a) is due to $m_1(0)\le m_2(\dfrac{N_e-N_a-\xi}{2}+1)$. (b)
is due to the fact that $\lfloor \dfrac{N_b+N_a-N_e}{3}\rfloor \ge N_a$, which is due to
$2N_e < {N_b-N_a}$ and $N_e \ge N_a$.

Also, for the case of $N_e < N_a$, we have
\begin{align}
&d_{s,a}^{\rm{wc}}=m_2(0)         \nonumber\\
&=\min\{\lfloor \frac{N_b+N_e-N_a}{3}\rfloor+N_a-N_e, N_a \} \nonumber \\
&=N_a. \nonumber
\end{align}

Concluding, for the case of $N_e < \min\{\dfrac{N_b-N_a}{2}, N_b \}$,
it holds that $d_{s,a}^{\rm{wc}}= N_a$. This completes the proof.

\bibliography{mybib}
\bibliographystyle{IEEEtran}

\end{document}